\newcolumntype{C}[1]{>{\centering\arraybackslash}p{#1}}
\pgfplotsset{compat=1.15}
\newtheorem{theorem}{Theorem}
\newtheorem{lemma}{Lemma}
\theoremstyle{remark}
\newtheorem{remark}{Remark}
\newtheorem{example}{Example}
\newtheorem{definition}{Definition}
\begin{document}

\title{Capacity Region Bounds for the $K-$user Dispersive Nonlinear Optical WDM Channel with Peak Power Constraints}%

\author{
\IEEEauthorblockN{Viswanathan~Ramachandran, Gabriele~Liga, \textit{Member, IEEE}, Astrid~Barreiro, \textit{Student Member, IEEE}, and Alex~Alvarado, \textit{Senior Member, IEEE}}\thanks{The authors are with the Information and Communication Theory Lab, Signal Processing
Systems (SPS) Group, Department of Electrical Engineering, Eindhoven
University of Technology, 5600 MB Eindhoven, The Netherlands (e-mails:
\{v.ramachandran,g.liga, a.barreiro.berrio,a.alvarado\}@tue.nl). 

The work of V.  Ramachandran, A. Barreiro and A. Alvarado has received funding from the European Research Council (ERC) under the European Union's Horizon 2020 research and innovation programme (grant agreement No 757791). The work of G.~Liga is funded by the EuroTechPostdoc programme under the European Union's Horizon 2020 research and innovation programme (Marie Sk\l{}odowska-Curie grant agreement No 754462).

This work has appeared in part at the 2022 Optical Fiber Communications Conference (OFC)~\cite{viswanathan2021wdm}.
}
}
\maketitle

\begin{abstract}
    It is known that fiber nonlinearities induce crosstalk in a wavelength division multiplexed (WDM) system, which limits the capacity of such systems as the transmitted signal power is increased. A network user in a WDM system is an entity that operates around a given optical wavelength. Traditionally, the channel capacity of a WDM system has been analyzed under different assumptions for the transmitted signals of the other users, while treating the interference arising from these users as noise. In this paper, we instead take a multi-user information theoretic view and treat the optical WDM system impaired by cross-phase modulation and dispersion as an interference channel. { We characterize an outer bound on the capacity region of simultaneously achievable rate pairs, assuming a simplified $K$-user perturbative channel model using genie-aided techniques.} Furthermore, an achievable rate region is obtained by time-sharing between certain single-user strategies. It is shown that such time-sharing can achieve better rate tuples compared to treating nonlinear interference as noise. 
    For the single-polarization single-span system under consideration and a power $4.4$~dB above the optimum launch power, treating nonlinear interference as noise results in a rate of $1.67$~bit/sym, while time-sharing gives a rate of $6.33$~bit/sym.
\end{abstract}

\section{Introduction}
In a point-to-point wavelength division multiplexing (WDM) system, independent data from different users across different wavelengths are multiplexed into a single optical fiber using several optical transmitters, with corresponding demultiplexing at the receiver side. The nonlinear {Kerr effect} in an optical fiber causes the signal in one wavelength to interfere with the signals in other wavelengths. 
The combination of nonlinear effects with chromatic dispersion (group velocity dispersion) and noise, result in  a stochastic nonlinear channel with memory. Such a channel is described by the (noisy) nonlinear Schr{\"o}dinger equation (NLSE) (or the Manakov equation in case of dual polarization systems), which considers intra-channel effects like self-phase modulation (SPM), and inter-channel effects such as cross-phase modulation (XPM) and four-wave mixing (FWM). SPM can be compensated for using digital backpropagation (DBP)~\cite{ip2008compensation}, while FWM is known to be negligible compared to XPM for most practical systems.
Full XPM compensation, on the other hand, would entail joint detection of multiple channels that is prohibitively complex. As a result, XPM represents the dominant transmission bottleneck in WDM systems. This paper focuses on XPM-dominated systems.

\begin{figure*}
\begin{center}
\scalebox{1.0}{\begin{tikzpicture}[thick]
\node (e1) at (-2.9,1.5) [rectangle, draw, minimum height=1.0cm]{\scriptsize ${E_1}$};
\node (e11) at (-2.9,0.3) [rectangle, draw, minimum height=1.0cm]{\scriptsize ${E_2}$};
\node (e2) at (-2.9,-1.5) [rectangle, draw, minimum height=1.0cm]{\scriptsize ${E_K}$};
\foreach \i in {-0.5,-0.6,-0.7} {\fill (-2.85,\i) circle(0.02cm);}
\foreach \i in {-0.5,-0.6,-0.7} {\fill (-0.98,\i) circle(0.02cm);}
\node (e3) at (-1.02,1.5) [rectangle, draw, minimum height=1.0cm,minimum width=0.28cm,align=center]{\scriptsize {E-O}}; 
\node (e33) at (-1.02,0.3) [rectangle, draw, minimum height=1.0cm,align=center]{\scriptsize {E-O}}; 
\node (e4) at (-1.02,-1.5) [rectangle, draw, minimum height=1.0cm,align=center]{\scriptsize {E-O}};
\node (e5) at (7.5,1.5) [rectangle, draw, minimum height=1.0cm,align=center]{\scriptsize {MF}};
\node (e55) at (7.5,0.3) [rectangle, draw, minimum height=1.0cm,align=center]{\scriptsize {MF}};
\node (e6) at (7.5,-1.5) [rectangle, draw, minimum height=1.0cm,align=center]{\scriptsize {MF}};
\node (e7) at (5.45,1.5) [rectangle, draw, minimum height=1.0cm,minimum width=0.05cm,align=center]{\scriptsize {O-E}};
\node (e77) at (5.45,0.3) [rectangle, draw, minimum height=1.0cm,minimum width=0.05cm,align=center]{\scriptsize {O-E}};
\node (e8) at (5.45,-1.5) [rectangle, draw, minimum height=1.0cm,minimum width=0.05cm,align=center]{\scriptsize {O-E}};
\node (e9) at (6.45,1.5) [rectangle, draw, minimum height=1.0cm,align=center]{\scriptsize {DBP}};
\node (e99) at (6.45,0.3) [rectangle, draw, minimum height=1.0cm,align=center]{\scriptsize {DBP}};
\node (e10) at (6.45,-1.5) [rectangle, draw, minimum height=1.0cm,align=center]{\scriptsize {DBP}};
\node[trapezium,draw,rotate=-90,minimum height=0.7cm,minimum width=0.4cm] (t1) at (0.1,0) {\small {WDM MUX}};
\node[trapezium,draw,rotate=90,minimum height=0.68cm,minimum width=0.25cm] (t2) at (4.40,0) {\small {WDM DE-MUX}};
\node (c1) at (1.6,0.1) [circle, draw, minimum height=0.5cm,thick]{};
\node (c2) at (1.5,0.1) [circle, draw, minimum height=0.5cm,thick]{};
\node[isosceles triangle, draw, minimum size =.9cm] (T)at (2.4,-0.25){};
\node (d1) at (9.3,1.5) [rectangle, draw, right, minimum height=1.0cm]{\scriptsize ${D_1}$};
\node (d11) at (9.3,0.3) [rectangle, draw, right, minimum height=1.0cm]{\scriptsize ${D_2}$};
\node (d2) at (9.3,-1.5) [rectangle, draw, right, minimum height=1.0cm]{\scriptsize ${D_K}$};
\foreach \i in {-0.5,-0.6,-0.7} {\fill (9.65,\i) circle(0.02cm);}
\foreach \i in {-0.5,-0.6,-0.7} {\fill (5.45,\i) circle(0.02cm);}
\foreach \i in {-0.5,-0.6,-0.7} {\fill (6.45,\i) circle(0.02cm);}
\foreach \i in {-0.5,-0.6,-0.7} {\fill (7.4,\i) circle(0.02cm);}
\node () at (1.57,-0.57) {\small {SSMF}};
\node () at (2.55,-0.25) {\footnotesize {EDFA}};
\node () at (1.02,-1.17) {\small $A(t,0)$};
\node () at (3.4,-1.17) {\small $A(t,L)$};
\node () at (-2.21,1.8) {\small $X_1^n$};
\node () at (-2.21,0.55) {\small $X_2^n$};
\node () at (-2.16,-1.8) {\small $X_K^n$};
\node () at (9,1.8) {\small $Y_1^n$};
\node () at (9,0.55) {\small $Y_2^n$};
\node () at (9,-1.8) {\small $Y_K^n$};
\node () at (2.2,2.5) {{Interference Channel \eqref{eq:Kuserapprox}}};
\node (distr) at (2.2,3) {$p(Y_1^n,\cdots,Y_K^n|X_1^n,\cdots,X_K^n)$};
\node[draw, ellipse, minimum width=.4cm, minimum height=4cm] at (-1.75,0) (ell1) {};
\node[draw, ellipse, minimum width=.4cm, minimum height=4cm] at (8.55,0) (ell2) {};
\draw[-] (distr)--(distr-|ell1)--(ell1);
\draw[-] (distr)--(distr-|ell2)--(ell2);
\draw[-,red] (t2.117) --++(-0.98,0) node[left]{};
\draw[-,blue] (t2.121) --++(-0.89,0) node[left]{};
\draw[-,orange] (t2.125) --++(-0.83,0) node[left]{};
\draw[-,green] (t2.129) --++(-0.90,0) node[left]{};
\draw[-,purple] (t2.133) --++(-1.00,0) node[left]{};
\draw[<-,color=red] (e1) --++(-0.7,0) node[above]{\small$M_1$};
\draw[<-,color=blue] (e11) --++(-0.7,0) node[above]{\small$M_2$};
\draw[<-,color=purple] (e2) --++(-0.7,0) node[above]{\small$M_K$};
\draw[->,color=red] (d1) --++(0.7,0) node[above]{\small$\hat{M}_1$};
\draw[->,color=blue] (d11) --++(0.7,0) node[above]{\small$\hat{M}_2$};
\draw[->,color=purple] (d2) --++(0.7,0) node[above]{\small$\hat{M}_K$};
\draw[-,color=red] ($(T.-180)+(0,+0.08)$)  -- ($(T.180-|t1.90)+(0,+0.08)$);
\draw[-,color=blue] ($(T.-180)+(0,+0.04)$)  -- ($(T.180-|t1.90)+(0,+0.04)$);
\draw[-,color=orange] ($(T.-180)+(0,+0.00)$)  -- ($(T.180-|t1.90)+(0,+0.00)$);
\draw[-,color=green] ($(T.-180)+(0,-0.04)$)  -- ($(T.180-|t1.90)+(0,-0.04)$);
\draw[-,color=purple] ($(T.-180)+(0,-0.08)$)  -- ($(T.180-|t1.90)+(0,-0.08)$);
\draw[->,color=red] (e1) --++(1.55,0) node[above]{};
\draw[->,color=blue] (e11) --++(1.55,0) node[above]{};
\draw[->,color=purple] (e2) --++(1.55,0) node[above]{};
\draw[->,color=red] (e3.east) --++(0.44,0) node[above]{};
\draw[->,color=blue] (e33.east) --++(0.44,0) node[above]{};
\draw[->,color=purple] (e4.east) --++(0.44,0) node[above]{};
\draw[<-,color=red] (d1.west) --++(-1.05,0) node[midway, above]{};
\draw[<-,color=blue] (d11.west) --++(-1.05,0) node[midway, above]{};
\draw[<-,color=purple] (d2.west) --++(-1.05,0) node[midway, above]{};
\draw[<-,color=red] (e7.west) --++(-0.39,0) node[above]{};
\draw[<-,color=blue] (e77.west) --++(-0.39,0) node[above]{};
\draw[<-,color=purple] (e8.west) --++(-0.39,0) node[above]{};
\draw[->,color=red] (e7.east) --++(0.32,0) node[above]{};
\draw[->,color=blue] (e77.east) --++(0.32,0) node[above]{};
\draw[->,color=purple] (e8.east) --++(0.32,0) node[above]{};
\draw[->,color=red] (e9.east) --++(0.4,0) node[above]{};
\draw[->,color=blue] (e99.east) --++(0.4,0) node[above]{};
\draw[->,color=purple] (e10.east) --++(0.4,0) node[above]{};
\draw[-,color=red] (e5.east) --++(0.24,0) node[above](tt1){};
\draw[-,color=blue] (e55.east) --++(0.24,0) node[above](tt2){};
\draw[-,color=purple] (e6.east) --++(0.24,0) node[above](tt3){};
\coordinate (A) at (8.3,1.7);
\coordinate (B) at (8.05,1.5);
\draw[-,color=red] (B) -- (A) node[above]{};
\coordinate (C) at (8.3,0.5);
\coordinate (D) at (8.05,0.3);
\draw[-,color=blue] (D) -- (C) node[above]{};
\coordinate (E) at (8.3,-1.3);
\coordinate (F) at (8.05,-1.5);
\draw[-,color=purple] (F) -- (E) node[above]{};
\coordinate (AA) at (7.98,1.7);
\coordinate (BB) at (8.2,1.25);
\draw [<-,color=red] (BB) to [out=90,in=10] (AA);
\coordinate (CC) at (7.98,0.5);
\coordinate (DD) at (8.2,0.05);
\draw [<-,color=blue] (DD) to [out=90,in=10] (CC);
\coordinate (EE) at (7.98,-1.3);
\coordinate (FF) at (8.2,-1.75);
\draw [<-,color=purple] (FF) to [out=90,in=10] (EE);
\end{tikzpicture}}
\caption{System model for WDM transmission under consideration is modeled as an interference channel with channel law $p(Y_1^n,Y_2^n,\ldots,Y_K^n|X_1^n,X_2^n,\ldots,X_K^n)$ and approximated model \eqref{eq:Kuserapprox}. The $k$-th user transmits message $M_k$ using an encoder $E_k$. After E-O conversion, propagation, O-E conversion, receiver DSP, a decoder $D_k$ is used. $A(t,z)$ represents the complex envelope of the optical field at time $t$ and distance $z$ from the transmitter, with $L$ being the length of the fiber.}
\label{fig:ICmodel}
\end{center}
\end{figure*}
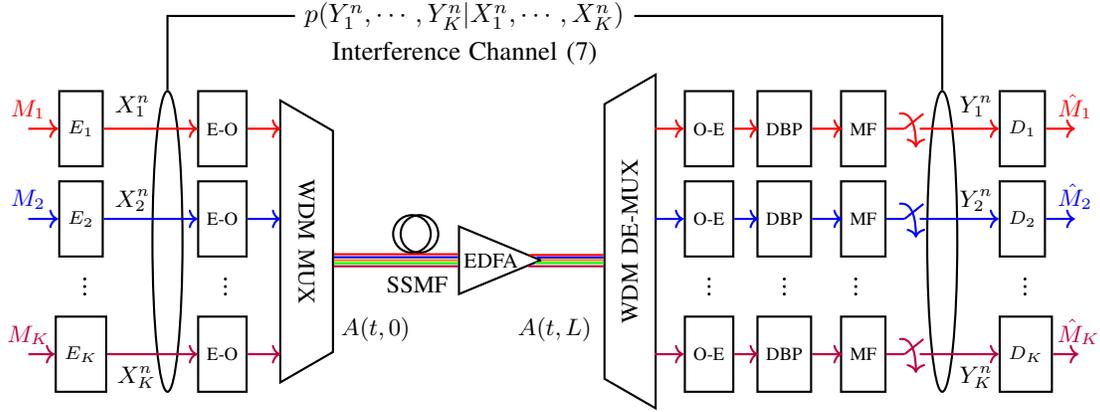

Information theory applied to optical communications studies transmission limits of such systems and has received increased interest in the last 10 years. Early works on the capacity limits of optical fibers were based on approximations involving low fiber nonlinearity~\cite{stark1999fundamental,mitra2001nonlinear,narimanov2002channel}. A capacity lower bound based on mismatched decoding~\cite{merhav1994information} was obtained in \cite{wegener2004effect} for a channel model incorporating XPM as well as FWM. More recently, the seminal work of \cite{essiambre2010capacity} stressed the importance of information theory in the studies of transmission limits over optical fiber channels. For WDM systems, the coupling between the different users resulting from XPM makes it a \emph{multiuser channel}, whose fundamental limits fall within the domain of \textit{multiuser information theory}. Unlike single-user information theory where channel capacity is the key quantity under study, the central object of interest in multiuser information theory is the \emph{capacity region}, i.e., the region of all simultaneously achievable rates of all the different users. {To the best of our knowledge, two works exist in the literature on multiuser information-theoretic characterizations of optical channels~\cite{taghavi2006multiuser,ghozlan2017models}.}

Despite the inherent multiuser nature of optical WDM channels, their information-theoretic analysis so far has been largely restricted to a single-user view focusing on the individual users. As such, optical WDM channels have rarely been truly analyzed from a multi-user perspective in the information-theoretic sense. For instance, \cite{agrell2015influence} examined the impact of different {behavioral assumptions} for the interfering users on the capacity of a specific user in the system. As a result of such assumptions, the characterisation of achievable information rates in \cite{agrell2015influence} is performed from a single-user perspective. Later, \cite{secondini2016scope} analyzed the capacity of a single user in the WDM system under the assumption that the interfering users transmit independent information at the same transmit power with the same modulation format. Under this behavioral model, it was shown that WDM capacity grows unbounded with power as opposed to Gaussian achievable information rates that exhibit a finite maximum.   

The aforementioned works \cite{agrell2015influence}, \cite{secondini2016scope} attempt to reduce the analysis of a multi-user problem to more familiar single-user problems by making various behavioral assumptions on the interfering users. 
However, such an approach is neither optimal from a single-user nor an overall WDM system perspective. In this paper, we deviate from the norm of a single-user information theoretic analysis of optical WDM channels, and investigate them from a multi-user information theoretic viewpoint. This better captures the rate contention amongst different WDM users and allows us to investigate the ultimate limits in a WDM system. In addition to achievable information rates for the different WDM users, capacity upper bounds are also of interest since they present impossibility results for the system under consideration. The analogue of capacity upper bounds in a multi-user framework is the notion of a capacity region outer bound (see Sec.~\ref{sec:prelims} for a precise definition), which is largely neglected in the literature on optical multi-user channels.\footnote{Capacity upper bounds for the single-user scenario do exist but are rare. The only known ones for a general NLSE (single-user waveform) channel are that of \cite{yousefi2015upper}, \cite{kramer2015upper} and \cite{keykhosravi2017tighter}.}

In the multi-user information theory literature, multiple one-to-one communications over a shared medium with crosstalk between the users is known as an \textit{interference channel}~\cite[Chapter 6]{el2011network}. Interference channels have attracted very little attention in the fiber optical communication literature. As noted earlier, only two papers exist on the topic, both of them for highly simplified channel models. The earliest of such work from 2006 was \cite{taghavi2006multiuser}, where the benefits of multi-user detection in WDM systems were analyzed by modeling it as a \emph{multiple access channel}, which is an interference channel with full receiver cooperation. More than ten years later, \cite{ghozlan2017models} studied a simplified interference channel model based on logarithmic perturbation ignoring group velocity dispersion across WDM bands and introduced the technique of \emph{interference focusing} to achieve the optimal high power pre-log factors.

With the aforementioned exceptions~\cite{taghavi2006multiuser,ghozlan2017models}, a study of the set of \emph{simultaneously achievable} rates that captures the contention amongst the different users accessing the optical channel transmission resources based on a realistic channel model is not available in the literature. Moreover, capacity region outer bounds are as of today also completely missing in the framework of optical multi-user channels.  

{ In this paper, we take a step in the direction of analyzing optical multi-user channels and study a simplified first-order perturbative multi-user model that considers both chromatic dispersion and Kerr nonlinearity.} We do not make assumptions such as full receiver cooperation as in \cite{taghavi2006multiuser} or negligible group velocity dispersion as in  \cite{ghozlan2017models}. {However, in order to keep the information-theoretic analysis tractable, we shall only consider the dominant contributions to XPM in a first-order regular perturbative model in our analysis.}
The main contributions of this paper are twofold: (i) We propose a novel outer bound on the {capacity region} of an optical multi-user channel where both the transmitters and the receivers are independently operated, and (ii) we obtain an achievable rate region by time-sharing between certain single-user strategies, and show that the latter can achieve better rate tuples compared to treating interference arising from other WDM users as noise (abbreviated \emph{TIN} henceforth). {Finally, we also perform a validation of the simplified model under consideration via split-step Fourier method (SSFM) simulations to discern the range of applicability (with respect to launch power) of the proposed techniques.} 

{ Parts of this work have been recently published in \cite{viswanathan2021wdm}, without theorem proofs or SSFM simulation results. Some flaws in the capacity curves of \cite{viswanathan2021wdm} were corrected in a recent conference publication~\cite{viswanathan2022wdm} by the authors, which has been incorporated into this extended version.}

\emph{Notation convention:} Random variables or random vectors are represented by upper-case letters, whereas their realizations are represented by the corresponding lower case characters. A length-$n$ block of random symbols is denoted by $X_k^n \triangleq (X_k[1],X_k[2],\ldots,X_{k}[n])$, where the subscript $k$ is a user index and the number within square brackets is a discrete time index. All logarithms in this paper are assumed to be with respect to base $2$, unless stated otherwise. Given a complex random variable $X$, we will denote its real part by $X^R$ and its imaginary part by $X^I$, i.e., $X=X^R+\jmath X^I$ with $\jmath = \sqrt{-1}$. Sets are denoted using calligraphic letters.

\emph{Paper Organization:} The channel model along with a review of some (network) information-theoretic preliminaries are first described in Sec.~\ref{sec:model}. 
Capacity region outer bounds are derived in Sec.~\ref{sec:Kuser}. Achievable rates for the individual users are computed in Sec.~\ref{sec:ach}. Sec.~\ref{sec:num} contains the numerical results and discussions. Finally, Sec.~\ref{sec:conc} concludes the paper.

\section{Preliminaries} \label{sec:model}

\subsection{System Model}
We study the $K$-user WDM system shown in Fig.~\ref{fig:ICmodel}, where the interference channel $p(Y_1^n,Y_2^n,\ldots,Y_K^n|X_1^n,X_2^n,\ldots,X_K^n)$ encompasses the electro-optical (E-O) conversion, WDM multiplexing, the physical channel, WDM demultiplexing, optical-electrical (O-E) conversion, single-channel DBP, matched filtering and symbol-rate sampling. Given that this is the first study on the capacity region for regular perturbative models from an interference channel viewpoint, we assume single-polarization transmission and ignore signal-noise interactions by studying a single span of standard single mode fiber (SSMF).

For such a single-mode fiber with Kerr nonlinearity and chromatic dispersion, the complex envelope of the optical field, $A(t,z)$, at time $t$ and distance $z$ from the transmitter is governed by the nonlinear Schr{\"o}dinger equation (NLSE)~\cite{agrawal2000nonlinear} 
\begin{align} \label{eq:nlse}
\pderiv{A(t,z)}{z} 
= \frac{\jmath}{2} \beta_2 \pderiv[2]{A(t,z)}{\tau} -\jmath \gamma |A(t,z)|^2 A(t,z)+W(t,z),
\end{align}
where $\tau=t-\beta_1 z$ is the shifted time reference of the moving pulse, with $\beta_1$ being the inverse of the group velocity. In \eqref{eq:nlse}, $\beta_2$ stands for the group velocity dispersion parameter, while $\gamma$ is the fiber nonlinearity parameter, with the second-last term on the right-hand side representing Kerr nonlinearity. The term $W(t,z)$ represents additive noise from the erbium doped fiber amplifier (EDFA) which ideally compensates for the fiber attenuation.

\begin{figure*}
\begin{center}
\scalebox{0.8}{\begin{tikzpicture}[thick]
\node (e1) at (-5.4,2.5) [rectangle, draw, minimum width=2.0cm, minimum height=0.6cm]{};
\node (e2) at (-3.4,2.5) [rectangle, draw, minimum width=2.0cm, minimum height=0.5cm]{\color{red}$|X_1[i-2]|^2$};
\node (e3) at (-1.4,2.5) [rectangle, draw, minimum width=2.0cm, minimum height=0.5cm]{\color{red}$|X_1[i-1]|^2$};
\node (e4) at (0.6,2.5) [rectangle, draw, minimum width=2.0cm, minimum height=0.5cm]{\color{red}$|X_1[i]|^2$};
\node (e5) at (2.6,2.5) [rectangle, draw, minimum width=2.0cm, minimum height=0.5cm]{\color{red}$|X_1[i+1]|^2$};
\node (e6) at (4.6,2.5) [rectangle, draw, minimum width=2.0cm, minimum height=0.5cm]{\color{red}$|X_1[i+2]|^2$};
\node (e7) at (6.6,2.5) [rectangle, draw, minimum width=2.0cm, minimum height=0.6cm]{};
\node (e8) at (8.6,2.5) [rectangle, draw, minimum width=2.0cm, minimum height=0.6cm]{};
\node (e9) at (-7.4,2.5) [rectangle, draw, minimum width=2.0cm, minimum height=0.6cm]{};
\node () at (-7.4,3.1) {$\cdots$};
\node () at (-5.4,3.1) {$i-3$};
\node () at (-3.4,3.1) {$i-2$};
\node () at (-1.4,3.1) {$i-1$};
\node () at (0.6,3.1) {$i$};
\node () at (2.6,3.1) {$i+1$};
\node () at (4.6,3.1) {$i+2$};
\node () at (6.6,3.1) {$i+3$};
\node () at (8.6,3.1) {$\cdots$};
\node () at (-9.2,2.5) {$k=1$};
\node () at (-9.2,0) {$k=3$};
\node () at (-9.2,-3) {$k=2$};
\node (c1) at (-3.4,1.25) [circle, draw, minimum height=0.35cm,thick,inner sep=0.2ex]{$f_2(\cdot)$};
\node (c2) at (-1.4,1.25) [circle, draw, minimum height=0.35cm,thick,inner sep=0.2ex]{$f_1(\cdot)$};
\node (c3) at (0.6,1.25) [circle, draw, minimum height=0.35cm,thick,inner sep=0.2ex]{$f_0(\cdot)$};
\node (c4) at (2.6,1.25) [circle, draw, minimum height=0.20cm,thick,inner sep=0.2ex]{\footnotesize $f_{-1}(\cdot)$};
\node (c5) at (4.6,1.25) [circle, draw, minimum height=0.20cm,thick,inner sep=0.2ex
]{\footnotesize$f_{-2}(\cdot)$};
\node (c6) at (2.6,-1) [circle, draw, minimum height=0.3cm,thick,inner sep=0.2ex]{$+$};
\node (c7) at (2.6,-2) [circle, draw, minimum height=0.3cm,thick,inner sep=0.2ex]{$\times$};
\node (c8) at (4.6,-2) [circle, draw, minimum height=0.3cm,thick,inner sep=0.2ex]{$+$};

\node (a1) at (-5.4,0) [rectangle, draw, minimum width=2.0cm, minimum height=0.6cm]{};
\node (a2) at (-3.4,0) [rectangle, draw, minimum width=2.0cm, minimum height=0.5cm]{\color{purple}$|X_3[i-2]|^2$};
\node (a3) at (-1.4,0) [rectangle, draw, minimum width=2.0cm, minimum height=0.5cm]{\color{purple}$|X_3[i-1]|^2$};
\node (a4) at (0.6,0) [rectangle, draw, minimum width=2.0cm, minimum height=0.5cm]{\color{purple}$|X_3[i]|^2$};
\node (a5) at (2.6,0) [rectangle, draw, minimum width=2.0cm, minimum height=0.5cm]{\color{purple}$|X_3[i+1]|^2$};
\node (a6) at (4.6,0) [rectangle, draw, minimum width=2.0cm, minimum height=0.5cm]{\color{purple}$|X_3[i+2]|^2$};
\node (a7) at (6.6,0) [rectangle, draw, minimum width=2.0cm, minimum height=0.6cm]{};
\node (a8) at (8.6,0) [rectangle, draw, minimum width=2.0cm, minimum height=0.6cm]{};
\node (a9) at (-7.4,0) [rectangle, draw, minimum width=2.0cm, minimum height=0.6cm]{};

\node (b1) at (-5.4,-3) [rectangle, draw, minimum width=2.0cm, minimum height=0.6cm]{};
\node (b2) at (-3.4,-3) [rectangle, draw, minimum width=2.0cm, minimum height=0.5cm]{\color{blue}$X_2[i-2]$};
\node (b3) at (-1.4,-3) [rectangle, draw, minimum width=2.0cm, minimum height=0.5cm]{\color{blue}$X_2[i-1]$};
\node (b4) at (0.6,-3) [rectangle, draw, minimum width=2.0cm, minimum height=0.5cm]{\color{blue}$X_2[i]$};
\node (b5) at (2.6,-3) [rectangle, draw, minimum width=2.0cm, minimum height=0.5cm]{\color{blue}$X_2[i+1]$};
\node (b6) at (4.6,-3) [rectangle, draw, minimum width=2.0cm, minimum height=0.5cm]{\color{blue}$X_2[i+2]$};
\node (b7) at (6.6,-3) [rectangle, draw, minimum width=2.0cm, minimum height=0.6cm]{};
\node (b8) at (8.6,-3) [rectangle, draw, minimum width=2.0cm, minimum height=0.6cm]{};
\node (b9) at (-7.4,-3) [rectangle, draw, minimum width=2.0cm, minimum height=0.6cm]{};

\draw[->,red] (e2.south) --++ (c1);
\draw[->,red] (e3.south) --++ (c2);
\draw[->,red] (e4.south) --++ (c3);
\draw[->,red] (e5.south) --++ (c4);
\draw[->,red] (e6.south) --++ (c5);
\draw[->,purple] (a2.north) --++ (c1);
\draw[->,purple] (a3.north) --++ (c2);
\draw[->,purple] (a4.north) --++ (c3);
\draw[->,purple] (a5.north) --++ (c4);
\draw[->,purple] (a6.north) --++ (c5);
\draw[<-] (c1) --++(-0.8,0) node[above]{$\jmath c_2^{2}$};
\draw[<-] (c2) --++(-0.8,0) node[above]{$\jmath c_2^{1}$};
\draw[<-] (c3) --++(-0.8,0) node[above]{$\jmath c_2^{0}$};
\draw[<-] (c4) --++(-0.9,0) node[above]{$\jmath c_2^{-1}$};
\draw[<-] (c5) --++(-0.9,0) node[above]{$\jmath c_2^{-2}$};
\draw[-,blue] (c1) --++(0.7,0) -| (a2.25);
\draw[-,blue] (c2) --++(0.7,0) -| (a3.25);
\draw[-,blue] (c3) --++(0.7,0) -| (a4.25);
\draw[-,blue] (c4) --++(0.8,0) -| (a5.20);
\draw[-,blue] (c5) --++(0.8,0) -| (a6.20);
\draw[->,blue] (a2.-25) --++ (0,-0.71) |- (c6.west);
\draw[->,blue] (a3.-25) --++ (0,-0.45) |- (c6.150);
\draw[->,blue] (a4.-25) --++ (0,-0.2) -| (c6.120);
\draw[->,blue] (a5.-20) --++ (0,-0.2) -| (c6.north);
\draw[->,blue] (a6.-20) --++ (0,-0.3) |- (c6.east);
\draw[->,blue] (c6) --++ (c7);
\draw[->,blue] (c7) --++ (c8);
\draw[->,blue] (b4.north) --++(0,0.7) |- (c7.west);
\draw[->,blue] (b4.north) --++(0,0.15) -| (c8);
\draw[<-] (c8) --++(0,0.4) node[above]{$N_2[i]$};
\draw[->,blue] (c8) --++(0.9,0) node[above]{$Y_2[i]$};
\end{tikzpicture}}
\caption{Illustration of the simplified channel model in \eqref{eq:Kuserapprox} for $K=3$ in Example 1 with $c_{2,1}^m=c_{2,3}^m=c_2^m$. The operation $f_m(\cdot)$ corresponds to $\jmath c_2^m \left(|X_1[i-m]|^2+|X_3[i-m]|^2\right)$ for $m \in \mathcal{M}$.}
\label{fig:ex1}
\end{center}
\end{figure*}
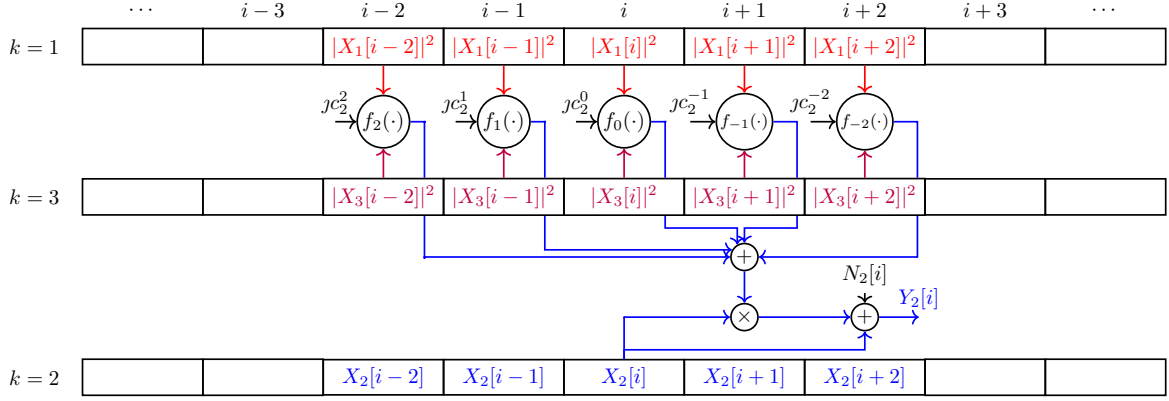

\subsection{Channel Model} \label{sec:chnmodel}

The output at the receiver of user$-k$, $k \in \{1,2,\ldots,K\}$, can be approximated using a first-order regular perturbative discrete-time model~\cite[eqs.~(59),(60)]{mecozzi2012nonlinear}, \cite[eqs.~(5),(7)]{dar2013properties}
\begin{align} \label{eq:Kuser}
{Y_{k}[i]}&{\approx} {X_{k}[i]+N_{k}[i]} \notag\\
&\phantom{w} {+\jmath \gamma \sum_{p=-\infty}^{\infty} X_{k}[i-p]} \notag\\
&\phantom{w}{\sum_{l=-\infty}^{\infty} \sum_{m=-\infty}^{\infty}  
{S_{k,w}^{p,l,m}} X_{w}[i-l] X_{w}^{*}[i-m]},
\end{align}
where $X_{k}[i]$ represents the input of user$-k$ at time instant $i \in \{1,2,\ldots,n\}$,
\begin{align}
\mathcal{W}_k\triangleq \{1,2,\ldots,K\}\setminus \{k\} \label{eq:Wnotation}
\end{align}
is the set of interferers for user $k$, $X_{w}[i]$ for $w \in \mathcal{W}_k$ are the inputs of the interfering users at instant $i$, $X_{w}[i-l]$ represents the corresponding input at a time lag\footnote{We use the convention that $X_w[i-l] = 0$ for $i-l < 1$ and $i-l > n$. In other words, $X_w[i-l] \neq 0$ only when the time index satisfies $ 1 \leq i-l \leq n$.} of $l$, and $\gamma$ is the fiber nonlinearity parameter from \eqref{eq:nlse}. The complex channel coefficients {$S_{k,w}^{p,l,m}$} are given in~\cite[eq.~(7)]{dar2013properties}, can be computed numerically, and depend on the properties of the optical link and the transmission parameters. {Specifically, they are given by
\begin{align}
S_{k,w}^{p,l,m} &= \int_{0}^L dz \: e^{-\alpha z} \int_{-\infty}^{\infty} dt \: g^{*}(z,t) \:g(z,t-pT) \notag\\
&\phantom{www} \times g^{*}(z,t-mT-\beta_2\Omega |k-w|z) \notag\\
&\phantom{www} \times g(z,t-lT-\beta_2 \Omega |k-w|z),
\end{align}
where $g(0,t)$ is the injected fundamental symbol waveform at time $t$ that becomes $g(z,t)$ when reaching point $z$ along the fiber, $L$ is the length of the link, $T$ is the symbol duration, $\beta_2$ is the group velocity dispersion parameter, $\Omega$ is the channel spacing and $\alpha$ is the loss coefficient.}
In \eqref{eq:Kuser}, $N_{k}[i]$ models amplified spontaneous emission (ASE) noise from the EDFAs. The ASE noise is  circularly symmetric complex Gaussian with mean zero and variance $\sigma_k^2$ per complex dimension.

We assume length-$n$ codewords $(x_k[1],x_k[2],\ldots,x_k[n])$ with maximum power constraints:
\begin{align}
&\max_{i \in \{1,2,\ldots,n\}} |x_{k}[i]|^2 \leq P_k, \: \forall \: k \in \{1,2,\ldots,K\}. \label{eq:peakpc}
\end{align}
In other words, $P_k$ represents a peak power constraint on the symbols transmitted by user-$k$, which is imposed on all the possible codewords in its codebook.
We note that the channel model specified by \eqref{eq:Kuser} is more realistic compared to the FWM-only model studied in \cite{agrell2015influence}, which assumes that both the dispersion and the nonlinearity are weak, and the generalized phase-matching condition is fulfilled~\cite{cappellini1991third}.

It is known from \cite[Figs. 4 and 5]{dar2016pulse} and \cite[eq.~(8)]{dar2013properties} that for few-span systems of relatively short lengths using lumped amplification, the largest contribution to the nonlinear interference (NLI) comes from the {$S_{k,w}^{0,m,m}$} terms in \eqref{eq:Kuser}, i.e., when only two time shifted sequences interact with each other. This corresponds to $p=0$ and $l=m$ in \eqref{eq:Kuser}, and is referred to as \emph{two-pulse collisions} in \cite{dar2016pulse}. In other words, the magnitudes {$|S_{k,w}^{0,m,m}|$} dominate over the terms corresponding to other values of the indices $p,l,m$. {This is illustrated in Fig.~\ref{fig:model_coeff_comp}, where the magnitudes of the coefficients are compared for a few different values of $p,l,m$ with $k=1, w=2$.} Furthermore, since physical channels do not have infinite memory, we truncate the sums on $p,l,m$ in \eqref{eq:Kuser} to the set
\begin{align}
\mathcal{M}\triangleq\{-M,-M+1,\ldots,M-1,M\}. \label{eq:Mnotation}
\end{align}
This results in the following approximate model:
\begin{align} \label{eq:Kuserapprox}
{Y_{k}[i]} &{\approx X_{k}[i]+N_{k}[i]} \notag\\
&\phantom{w} {+\jmath \gamma X_{k}[i] \sum_{m\in\mathcal{M}}  \sum_{\substack{w\in\mathcal{W}_k}}   {S_{k,w}^{0,m,m}} |X_{w}[i-m]|^2} \notag\\
&{= X_{k}[i]\biggl(1+\jmath  \sum_{m\in\mathcal{M}} \sum_{\substack{w\in\mathcal{W}_k}}  {c_{k,w}^m} |X_{w}[i-m]|^2\biggr)+N_{k}[i]},
\end{align}
where we have defined 
\begin{align}
{c_{k,w}^m \triangleq \gamma S_{k,w}^{0,m,m}}  \label{eq:coeff_abbrev}
\end{align}
for compactness. {The coefficients $c_{k,w}^m$} (computed along the direction $p=0$ and $l=m$ in \eqref{eq:Kuser} using \cite[eq.~(8)]{dar2013properties}) are known to be nonnegative reals, i.e.,
\begin{align}
{c_{k,w}^m \geq 0}.
\label{c.positive}
\end{align}
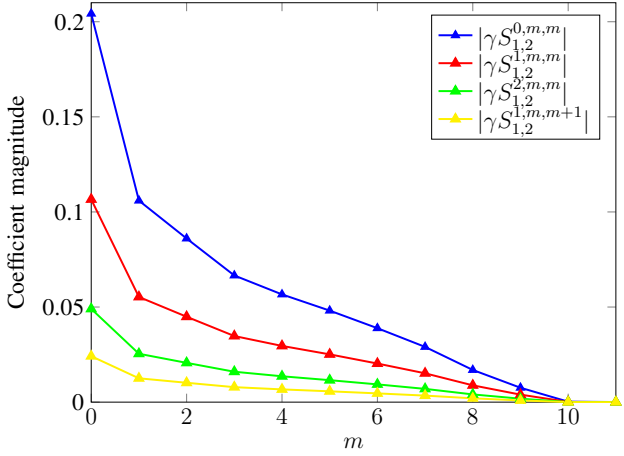
\begin{figure}[!t]
\vspace{-3cm}
\begin{center}
\scalebox{0.75}{
%
%
\definecolor{mycolor1}{rgb}{0.00000,0.44700,0.74100}%
\definecolor{mycolor2}{rgb}{0.85000,0.32500,0.09800}%
\begin{tikzpicture}

\begin{axis}[%
every axis/.append style={font=\large},
width=1.05\columnwidth,
height=0.8\columnwidth,
scale only axis,
xmin=0,
xmax=11,
xlabel={$m$},
ymin=0,
ymax=0.21,
ytick={0,0.05,0.1,0.15,0.2},
yticklabels={0,0.05,0.1,0.15,0.2},
ylabel={$\text{Coefficient magnitude}$},
ylabel near ticks,
xlabel near ticks,
axis background/.style={fill=white},
legend style={
legend pos=north east,legend cell align=left,row sep=-0.5ex,grid style={dashed},font=\large
}
]

\addplot [color=blue, mark = triangle*, mark size=2, line width=1.0pt]
  table[row sep=crcr]{%
0	0.20424\\
1	0.10596\\
2	0.0860400000000006\\
3	0.0665999999999993\\
4	0.0566399999999998\\
5	0.0481200000000008\\
6	0.0388800000000007\\
7	0.0290400000000002\\
8	0.0169200000000007\\
9	0.00755999999999979\\
10	0.000239999999999796\\
11	5.99999999995049e-05\\
};
\addlegendentry{$|\gamma S_{1,2}^{0,m,m}|$}

\addplot [color=red, mark = triangle*, mark size=2.5, line width=1.0pt]
  table[row sep=crcr]{%
0	0.10659199556541\\
1	0.0553000776053215\\
2	0.0449039135254989\\
3	0.0347582594235033\\
4	0.029560177383592\\
5	0.0251136252771619\\
6	0.0202913082039911\\
7	0.0151558536585366\\
8	0.00883047671840355\\
9	0.00394553215077605\\
10	0.000125254988913525\\
11	3.13137472283814e-05\\
};
\addlegendentry{$|\gamma S_{1,2}^{1,m,m}|$}

\addplot [color=green, mark = triangle*, mark size=2.5, line width=1.0pt]
  table[row sep=crcr]{%
0	0.0490221286031042\\
1	0.0254327494456763\\
2	0.0206515077605321\\
3	0.0159854767184035\\
4	0.0135948558758315\\
5	0.011549866962306\\
6	0.0093320620842572\\
7	0.00697024390243902\\
8	0.00406117516629712\\
9	0.00181456762749446\\
10	5.76053215077605e-05\\
11	1.44013303769401e-05\\
};
\addlegendentry{$|\gamma S_{1,2}^{2,m,m}|$}

\addplot [color=yellow, mark = triangle*, mark size=2.5, line width=1.0pt]
  table[row sep=crcr]{%
0	0.0241578541027418\\
1	0.0125331287736316\\
2	0.0101769573394041\\
3	0.00787756112045927\\
4	0.00669947540334554\\
5	0.00569171533207958\\
6	0.00459879243788974\\
7	0.00343490052459665\\
8	0.00200132633871127\\
9	0.000894209640700782\\
10	2.83876076412947e-05\\
11	7.09690191032367e-06\\
};
\addlegendentry{$|\gamma S_{1,2}^{1,m,m+1}|$}

\end{axis}

\begin{axis}[%
width=5.833in,
height=4.375in,
at={(0in,0in)},
scale only axis,
xmin=0,
xmax=1,
ymin=0,
ymax=1,
axis line style={draw=none},
ticks=none,
axis x line*=bottom,
axis y line*=left
]
\end{axis}
\end{tikzpicture}
\end{center}
\caption{ Absolute value of channel coefficients $|\gamma S_{1,2}^{0,m,m}|, |\gamma S_{1,2}^{1,m,m}|, |\gamma S_{1,2}^{2,m,m}|, |\gamma S_{1,2}^{1,m,m+1}|$ involved in the model \eqref{eq:Kuser}. Since the coefficients are symmetric about $m=0$, only the positive time indices are shown. It is seen that the coefficients corresponding to $|S_{1,2}^{0,m,m}|$ dominate over the rest corresponding to other values of $p,l,m$.
}
\label{fig:model_coeff_comp}
\end{figure}

Notice that in \eqref{eq:Kuserapprox}, only $M$ symbols before and after the current time instance contribute to the nonlinear interference, as opposed to the infinite summations involved in \eqref{eq:Kuser}. This results in a finite-memory channel similar in structure to the heuristic model introduced and studied from a single-user point of view in \cite{agrell2014capacity}. 
We shall work with the model in \eqref{eq:Kuserapprox} in the sequel. { The same model was also employed in \cite[eq.~(48)]{taghavi2006multiuser} for XPM, albeit in a multiple access channel context as opposed to the interference channel under consideration here. The given model was also analyzed with a view towards estimating the variance of the nonlinear interference terms in \cite{dar2013properties}.}

{We note that the regular perturbative model in \eqref{eq:Kuser} (and consequently the simplified model in \eqref{eq:Kuserapprox}) is not energy preserving, as has been observed in \cite[Sec.~VI]{vannucci2002rp}. In other words, in the absence of additive noise, the channel appears to behave as an amplifier with an absolute value of gain larger than one (see also the discussion on the black dotted curve in Fig.~\ref{fig:OBsnr} in Sect.~\ref{sec:num}). This is in contrast to the NLSE channel, which is a conservative system. Nevertheless, the model in (6) has been adopted for XPM in the literature (\cite[eq.~(48)]{taghavi2006multiuser} and \cite{dar2013properties}), as mentioned earlier.}

\begin{example}[3 WDM channels] \label{ex:1}
Consider the case of $K=3$ users, the user of interest being $k=2$, and a single-sided channel memory of $M=2$ symbols. Assume for simplicity that $c_{2,1}^m=c_{2,3}^m=c_2^m$. In this case, the received symbols for user-2 are given by
\begin{align}
Y_{2}[i]&=X_{2}[i]+N_{2}[i] \notag\\
&+\jmath X_{2}[i] \sum_{m=-2}^{2} c_2^m \bigl(|X_{1}[i-m]|^2+|X_{3}[i-m]|^2\bigr).
\end{align}
This is pictorially represented in Fig.~\ref{fig:ex1}.
\end{example}

\subsection{Information-theoretic Preliminaries} \label{sec:prelims}
In this section, we review some relevant information-theoretic notions for the $K-$user model in Fig.~\ref{fig:ICmodel}, modeled by \eqref{eq:Kuserapprox}. An $(n,2^{nR_1},2^{nR_2},\ldots,2^{nR_K})$ code for this channel consists of $K$ message sets $\{1,2,\ldots,2^{nR_k}\}$ for $k \in \{1,2,\ldots,K\}$, $K$ encoders where $E_k$ maps a message $M_k \in \{1,2,\ldots,2^{nR_k}\}$ into a codeword $X_k^n(M_k)$, along with the decoders. The messages $M_k$ are assumed to be equally likely on their respective alphabets $\{1,2,\ldots,2^{nR_k}\}$ for all $k \in \{1,2,\ldots,K\}$, where $R_k$ is the transmission rate of user $k$.

At the receiver, $K$ decoders $D_k$ assign an estimate $\hat{M}_k$ (or an error message) to each received sequence $Y_k^n$. The probability of error is defined as
\begin{align} \label{eq:errorprob}
P_e \triangleq \textup{Pr}\{(\hat{M}_1(Y_1^n),\ldots,\hat{M}_K(Y_K^n)) \neq (M_1,\ldots,M_K)\}.
\end{align}
{\begin{remark}
While the error probability definition in \eqref{eq:errorprob} depends on the decisions of all the decoders, we emphasize here that the $K$ decoders do not cooperate, as shown at the receiver side of Fig.~\ref{fig:ICmodel}.
\end{remark}}

Using the above definitions, we now formally define certain important quantities. These quantities will be explained later using an example.
\begin{definition}[Achievability]\label{achievability}
A rate tuple $(R_1,R_2,\ldots,R_K)$ is said to be \emph{achievable} if there exists a sequence of $(n,2^{nR_1},2^{nR_2},\ldots,2^{nR_K})$ codes such that $\lim_{n \to \infty} P_e = 0$.
\end{definition}

\begin{definition}[Capacity Region]\label{CK}
The capacity region $\mathcal{C}_K$ is defined as the closure of the set of all achievable rate tuples $(R_1,R_2,\ldots,R_K)$.
\end{definition}
The capacity region in Definition~\ref{CK} is a collection of all rate tuples that are achievable as per Definition~\ref{achievability}. 
{\begin{remark}
Note that Definition~\ref{CK} is an \emph{operational definition} of the capacity region as commonly used in multi-user information theory~\cite{el2011network}. An optimization over the joint probability distributions of the inputs is implicit in such definitions. This is in contrast to the single-user channel capacity definition often found in the literature (i.e., $\max_{p(x)} I(X;Y)$), where the optimization over the input distribution is made explicit.
\end{remark}}

\begin{definition}[Single-user Capacity]\label{Ck}
The single-user capacity is defined as: 
\begin{align}
C_k \triangleq \max_{(R_1,R_2,\ldots,R_K) \in \mathcal{C}_K} R_k.
\end{align}
\end{definition}
The single-user capacity for user $k$ in Definition~\ref{Ck} can then be interpreted as the largest achievable rate $R_k$, obtained while the rates of all other users are also achievable according to Definition~\ref{achievability}.

\begin{definition}[Capacity Inner/Outer Bounds]\label{Cinout}
A region $\mathcal{C}_{\textup{in}}$ is said to be an inner bound to $\mathcal{C}_K$ if every rate tuple $(R_1,R_2,\ldots,R_K) \in \mathcal{C}_{\textup{in}}$ is achievable. A region $\mathcal{C}_{\textup{out}}$ is said to be an outer bound to $\mathcal{C}_K$ if every achievable rate tuple satisfies $(R_1,R_2,\ldots,R_K) \in \mathcal{C}_{\textup{out}}$.
\end{definition}

The inner bound in Definition~\ref{Cinout} is also often called an \textit{achievable region}. This inner bound is a subset of the capacity region whose interior is entirely achievable. The definition of the outer bound in Definition~\ref{Cinout} is such that $\mathcal{C}_{\textup{out}}$ contains all the achievable rate tuples, i.e., it contains the capacity region. However, unless it is a perfectly tight bound, $\mathcal{C}_{\textup{out}}$ will also contain rate tuples that are not achievable. 

{\begin{remark}The inner and outer bounds in Definition~\ref{Cinout} are generalizations of the familiar notions of single-user capacity and lower/upper bounds. It follows from the above definitions that $\mathcal{C}_{\textup{in}} \subseteq \mathcal{C}_{K} \subseteq \mathcal{C}_{\textup{out}}$. For the special case of $K=1$, the operation of containment $\subseteq$ is replaced by an inequality $\leq$ and the sets $\mathcal{C}$ become scalars.
\end{remark}}

We next review the notion of time-sharing which is commonly used to obtain inner bounds in multi-user information theory.
\begin{definition}[Time Sharing]\label{timeshare}
Given any two achievable rate tuples $(R_{1}',R_{2}',\ldots,R_{K}')$ and $(R_{1}'',R_{2}'',\ldots,R_{K}'')$, time sharing between them results in the rate tuple 
\begin{align}
&(R_{1\lambda},R_{2\lambda},\ldots,R_{K\lambda}) \notag\\
&= (\lambda R_{1}'+\bar{\lambda}R_{1}'',\lambda R_{2}'+\bar{\lambda}R_{2}'',\ldots,\lambda R_{K}'+\bar{\lambda}R_{K}''), \label{eq:timeshare}
\end{align}
where $\lambda \in [0,1]$ and $\bar{\lambda}=(1-\lambda)$.
\end{definition}
The rate tuple given by \eqref{eq:timeshare} is achievable as well. A proof of this statement is given for instance in \cite[Proposition 4.1]{el2011network}.

\begin{example}[Information-theoretic quantities]\label{ex:it}
The information theoretic concepts just described are illustrated in Fig.~\ref{fig:capillus}. The shaded region in red represents the capacity region $\mathcal{C}_K$. $C_1$ and $C_2$ represent the single-user capacities of the two users. Notice that when user$-1$ achieves its single-user capacity $C_1$, it is possible to obtain a nonzero rate for user$-2$. In other words, the rate of user$-2$ can be increased up to the corner point of the pentagon (marked as $P_1$) without reducing the rate of user$-1$. The shaded region in purple marked $\mathcal{C}_{\textup{in}}$ as well as the shaded region in blue marked $\mathcal{C}_{\textup{in}}'$ are inner bounds, while the region marked $\mathcal{C}_{\textup{out}}$ is an outer bound to the capacity region $\mathcal{C}_K$. In this example, $\mathcal{C}_{\textup{in}} \subseteq \mathcal{C}_{\textup{in}}' \subseteq \mathcal{C}_{K} \subseteq \mathcal{C}_{\textup{out}}$, and the outer bound is not tight, resulting in nonachievable rates (like $B$) being included in $\mathcal{C}_{\textup{out}}$. The region $\mathcal{C}_{\textup{out}}$ defines an inadmissible region, in that it is impossible to achieve any rate pairs outside $\mathcal{C}_{\textup{out}}$. The dotted line illustrates time sharing, where every point on the line segment joining two achievable rate pairs is achievable as well -- this line is traced by varying $\lambda$ from \eqref{eq:timeshare} in the interval $[0,1]$.
\begin{figure}
\begin{center}
\scalebox{0.925}{\definecolor{mycolor1}{rgb}{0.76078,0.76078,0.76078}%
\definecolor{mycolor2}{rgb}{0.49412,0.49412,0.96863}%
\definecolor{mycolor3}{rgb}{0.92900,0.69400,0.12500}%
\definecolor{mycolor4}{rgb}{0.80000,0.42353,0.80000}%
\begin{tikzpicture}[thick]
    \draw[->] (0,0) -- (6,0) coordinate (xaxis);
    \draw[->] (0,0) -- (0,6) coordinate (yaxis);
    \draw[fill=red!55!white]
    (0,0,0) -- (4,0,0) -- (4,2,0) -- (2,4,0) -- (0,4,0) -- cycle;
    \draw[thick] (0,0,0) -- (5,0,0) -- (5.5,2,0) -- (2,5.5,0) -- (0,5,0) -- cycle;
    \draw[fill=mycolor2] 
    (0,0,0) -- (3,0,0) -- (3,1.5,0) -- (1.5,3,0) -- (0,3,0) -- cycle;
    \draw[fill=mycolor4]
    (0,0,0) -- (1.5,0,0) -- (1.5,1.5,0) -- (0,1.5,0) -- cycle;
    \draw[thick,dashed] (0,1.5) -- (1.5,0);
    \node[below,yshift=-1ex] at (xaxis) {$R_1 \:  \textup{(bits/sym)}$};
    \node[left,xshift=15ex] at (yaxis) {$R_2 \: \textup{(bits/sym)}$};
    \node[below left] {$O$};
    \node at (0.9,1.1) {$\mathcal{C}_{\textup{in}}$};
    \node at (1.75,1.75) {$\mathcal{C}_{\textup{in}}'$};
    \node at (2.7,2.7) {$\mathcal{C}_{K}$};
    \node at (3.5,3.5) {$\mathcal{C}_{\textup{out}}$};
    \fill (4,0) circle(0.07cm);
    \fill (0,4) circle(0.07cm);
    
    \fill (2,4) circle(0.07cm);
    \node at (4,-0.3) {$C_1$};
    \node at (-0.35,4) {$C_2$};
    \node at (4.35,2) {$P_1$};\fill (4,2) circle(0.07cm);
    \node at (2.35,4) {$P_2$};
    \node at (4.4,0.75) {$B$};\node at (4.4,1.1) {\large $\times$};
    \node at (1.3,0.6) {$A_1$};\fill (1.3,0.85) circle(0.07cm);
    \node at (1.0,2.5) {$A_2$};\fill (1.0,2.25) circle(0.07cm);
    \node at (1.2,3.25) {$A_3$};\fill (1.2,3.5) circle(0.07cm);
    \node[draw=black,rounded corners,fill=white,inner sep=1ex] at (5,5) {Inadmissible Region};
\end{tikzpicture}}
\caption{Illustration of the notions of capacity region inner and outer bounds. The shaded region in red represents the capacity region $\mathcal{C}_K$. $C_1$ and $C_2$ represent the single-user capacities of the two users. The shaded region in purple marked $\mathcal{C}_{\textup{in}}$ is an inner bound to $\mathcal{C}_K$, and so is the region in blue marked $\mathcal{C}_{\textup{in}}'$. The region marked $\mathcal{C}_{\textup{out}}$ is an outer bound to $\mathcal{C}_K$. Points such as $A_1$, $A_2$, and $A_3$ are achievable, while points like $B$ which fall outside $\mathcal{C}_K$ are not achievable. The region $\mathcal{C}_{\textup{out}}$ defines an inadmissible region, in that it is impossible to achieve any rate pairs outside $\mathcal{C}_{\textup{out}}$. Note that every point on the dotted line segment joining two achievable rate pairs is achievable as well via time sharing.}
\label{fig:capillus}
\end{center}
\end{figure}
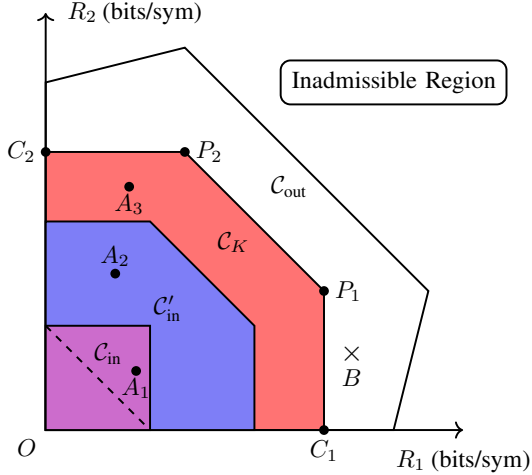
\end{example} 

We next discuss a simple strategy to obtain an inner bound for the channel model under consideration in \eqref{eq:Kuserapprox}. We resort to the most commonly used approach in WDM systems, that involves each user treating nonlinear interference as Gaussian noise (TIN). {In the context of expression \eqref{eq:Kuserapprox}, this approach means that the entire NLI term}
\begin{align}\label{NLI.TIN}
{\jmath X_{k}[i] \sum_{m\in\mathcal{M}} \sum_{\substack{w\in\mathcal{W}_k}}   {c_{k,w}^m}|X_{w}[i-m]|^2}
\end{align}
{is treated as Gaussian noise}. 
The achievable rate for a specific user via TIN is given by $L_k^{\textup{TIN}}$ (which is a lower bound to the single-user capacity\footnote{The lower bound is derived by treating the NLI as Gaussian noise (like the ASE noise term $N_k[i]$) and using the entropy power inequality, similar to Appendix~\ref{sec:appach}.} $C_k$), specified by:
\begin{align} \label{eq:lbTIN}
C_k \geq L_k^{\textup{TIN}} \triangleq \log\left(1+\frac{P_k}{2(\sigma_k^2+\sigma_{\textup{NLI}}^2)e}\right),
\end{align}
where $\sigma_{\textup{NLI}}^2$ is the variance of the term \eqref{NLI.TIN}. All users can simultaneously achieve \eqref{eq:lbTIN}, resulting in a square-shaped inner bound as shown in the purple region $\mathcal{C}_{\textup{in}}$ in Fig.~\ref{fig:capillus}.
The TIN strategy yields (single-user) achievable rates that exhibit a peaky behaviour as a function of power, often referred to as the nonlinear Shannon limit~\cite{splett1993ultimate,mitra2001nonlinear,ellis2009approaching}.

\section{Main Results}
The main results in this paper are organized into three subsections. One of our key contributions, a novel outer bound on the capacity region, is discussed in Sec.~\ref{sec:Kuser}. Next, Sec.~\ref{sec:ach} describes an inner bound on the capacity region obtained via time-sharing between certain single-user strategies. Finally, Sec.~\ref{sec:num} contains the numerical results and discussions on these capacity bounds.

\subsection{Capacity Region Outer Bounds} \label{sec:Kuser}
Here we obtain an outer bound on the capacity region $\mathcal{C}_K$ (Theorem~\ref{thm:OBgenieK}, ahead) using genie-aided techniques~\cite{etkin2008gaussian}. The following lemma will prove useful towards this end.

\begin{lemma} \label{lem:optinterf}
For all interferers $w\in\mathcal{W}_k$ with peak power constraints $P_w$ in \eqref{eq:peakpc} and $1 \leq i-m \leq n$, we have
\begin{align}
{\sum_{m\in\mathcal{M}} {c_{k,w}^m} |x_{w}[i-m]|^2 \leq P_w \left(\sum_{m\in\mathcal{M}} {c_{k,w}^m}\right)}\!, \label{eq:lem1ineq}
\end{align}
where $\mathcal{W}_k$, $\mathcal{M}$, and  ${c_{k,w}^m}$ are given by \eqref{eq:Wnotation}, \eqref{eq:Mnotation}, and \eqref{eq:coeff_abbrev}, resp. Equality is obtained in \eqref{eq:lem1ineq} with a choice of
\begin{align}\label{eq:constampl}
|x_{w}[i-m]|^2 = P_w.
\end{align}
\end{lemma}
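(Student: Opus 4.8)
The plan is to establish \eqref{eq:lem1ineq} by a term-by-term majorization followed by a summation, leaning entirely on two facts already in hand: the peak power constraint \eqref{eq:peakpc} and the nonnegativity of the coefficients recorded in \eqref{c.positive}. First I would fix an interferer $w \in \mathcal{W}_k$ and a time index $i$, and observe that, by \eqref{eq:peakpc}, every admissible symbol magnitude is dominated by the peak of the codeword, so $|x_w[i-m]|^2 \leq P_w$ for each lag $m \in \mathcal{M}$ with $1 \leq i-m \leq n$.

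Next, since $c_{k,w}^m \geq 0$ by \eqref{c.positive}, multiplying this per-symbol bound by $c_{k,w}^m$ preserves its direction, yielding $c_{k,w}^m |x_w[i-m]|^2 \leq c_{k,w}^m P_w$ for every $m$. Summing over $m \in \mathcal{M}$ and pulling the constant $P_w$ outside the sum then reproduces the right-hand side of \eqref{eq:lem1ineq} verbatim. For the equality claim, I would substitute \eqref{eq:constampl} into the left-hand side: setting $|x_w[i-m]|^2 = P_w$ for each $m$ makes every term-wise bound tight, so the sum attains $P_w \bigl(\sum_{m\in\mathcal{M}} c_{k,w}^m\bigr)$, and this choice is feasible precisely because it saturates \eqref{eq:peakpc} rather than violating it.

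There is no substantive obstacle here: the whole argument is a single invocation of monotonicity of a sum under nonnegative weights. The only point meriting a word of care is the boundary convention flagged in the footnote to \eqref{eq:Kuser}, whereby $x_w[i-m] = 0$ whenever $i-m \notin \{1,\ldots,n\}$; such out-of-range terms contribute zero to the left-hand side and therefore only strengthen the inequality, so they may be discarded without affecting either the bound or the equality condition.
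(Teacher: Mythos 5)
Your proof is correct and follows essentially the same route as the paper's: bound each squared amplitude by the peak power $P_w$ using \eqref{eq:peakpc}, use the nonnegativity of the coefficients in \eqref{c.positive} to preserve the inequality under the weighted sum, and observe that the constant-amplitude choice \eqref{eq:constampl} saturates every term-wise bound. The paper merely inserts an intermediate step (bounding each term by the maximum over a sliding window of indices before invoking the peak constraint), which is a cosmetic difference, and your remark on the boundary convention matches the paper's restriction to indices where the symbols are well defined.
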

\begin{proof}
See Appendix~\ref{app:lem:genieK}.
\end{proof}
Note that \eqref{eq:constampl} involves constant-amplitude signaling for the interferers $w\in\mathcal{W}_k$.
{
\begin{remark}
The conclusion from Lemma~\ref{lem:optinterf} is that all the interferers should always transmit at the maximum possible power. However, in a multi-span situation, there would be signal-noise interactions due to the inline amplifiers, and we do not expect this result to generalize.
\end{remark}
}
Our outer bound is stated next.
\begin{theorem} \label{thm:OBgenieK}
An outer bound $\mathcal{C}_{\textup{out}}$ on the capacity region $\mathcal{C}_K$ of the interference channel in \eqref{eq:Kuserapprox} is specified by the set of $(R_1,R_2,\ldots,R_K)$ tuples such that
\begin{align} \label{eq:ob1}
&R_k \leq U_k, \: \forall \:  k \in \{1,2,\ldots,K\},
\end{align}
where
\begin{align} \label{eq:ob2}
{U_k} &{\triangleq \log\left(1+\frac{P_{k}}{2\sigma_k^2}\left(1+\left(\sum_{\substack{w\in\mathcal{W}_k}}  P_w \sum_{m\in\mathcal{M}} {c_{k,w}^m}\right)^2\right) \right)},
\end{align}
and $P_k$ are peak power constraints in \eqref{eq:peakpc}, $\mathcal{W}_k$ and $\mathcal{M}$ are defined in \eqref{eq:Wnotation} and \eqref{eq:Mnotation} respectively, while ${c_{k,w}^m}$ is given in \eqref{eq:coeff_abbrev}.
\end{theorem}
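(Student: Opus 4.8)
The plan is to prove the box-type outer bound by bounding each rate $R_k$ in isolation via Fano's inequality and a genie-aided argument, and then single-letterizing. Since the constraints \eqref{eq:ob1}--\eqref{eq:ob2} couple no two rates, it suffices to fix a user $k$ and show $R_k \le U_k$. Starting from a reliable code, Fano's inequality gives $nR_k = H(M_k) \le I(M_k;Y_k^n) + n\epsilon_n$ with $\epsilon_n \to 0$, and the data-processing inequality along $M_k \to X_k^n \to Y_k^n$ yields $I(M_k;Y_k^n) \le I(X_k^n;Y_k^n)$. The genie step is to hand the interfering codewords $X_{\mathcal{W}_k}^n$ to receiver $k$ as side information, so that $I(X_k^n;Y_k^n) \le I(X_k^n;Y_k^n,X_{\mathcal{W}_k}^n)$. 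Because the messages (hence codewords) of distinct users are independent, $I(X_k^n;X_{\mathcal{W}_k}^n)=0$, and the chain rule collapses this to $I(X_k^n;Y_k^n\mid X_{\mathcal{W}_k}^n)$.

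Next I would expand the conditional mutual information as $h(Y_k^n\mid X_{\mathcal{W}_k}^n) - h(Y_k^n\mid X_k^n,X_{\mathcal{W}_k}^n)$. The key observation is that once all inputs are conditioned upon, the multiplicative factor $1+\jmath\sum_{m\in\mathcal{M}}\sum_{w\in\mathcal{W}_k} c_{k,w}^m|X_w[i-m]|^2$ in \eqref{eq:Kuserapprox} is a deterministic complex number, so $Y_k[i]$ is merely a fixed translate of the noise; hence $h(Y_k^n\mid X_k^n,X_{\mathcal{W}_k}^n) = h(N_k^n) = n\log(2\pi e\sigma_k^2)$ using independence of the noise samples. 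For the first term I would apply the chain rule together with ``conditioning reduces entropy'' to get $h(Y_k^n\mid X_{\mathcal{W}_k}^n) \le \sum_{i=1}^n h(Y_k[i]\mid X_{\mathcal{W}_k}^n)$, reducing the problem to a per-symbol bound.

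The heart of the calculation is to bound each $h(Y_k[i]\mid X_{\mathcal{W}_k}^n)$ via the maximum-entropy property of the circularly symmetric Gaussian: any complex random variable with second moment $v$ has differential entropy at most $\log(\pi e\, v)$. Conditioned on a realization of $X_{\mathcal{W}_k}^n$, the phase term $\Theta_k[i] \triangleq \sum_{m\in\mathcal{M}}\sum_{w\in\mathcal{W}_k} c_{k,w}^m|X_w[i-m]|^2$ is a fixed nonnegative real by \eqref{c.positive}, so $|1+\jmath\Theta_k[i]|^2 = 1+\Theta_k[i]^2$; since $X_k[i]$ is independent of the interferers and of the noise with $\mathbb{E}[|X_k[i]|^2]\le P_k$ by the peak constraint \eqref{eq:peakpc}, the conditional second moment is at most $(1+\Theta_k[i]^2)P_k + 2\sigma_k^2$. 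I then invoke Lemma~\ref{lem:optinterf} to replace $\Theta_k[i]$ by its maximum $\sum_{w\in\mathcal{W}_k}P_w\sum_{m\in\mathcal{M}}c_{k,w}^m$, uniformly over $i$ and over the conditioning realization. Subtracting the noise entropy, dividing by $n$, and letting $n\to\infty$ yields exactly $R_k \le \log\!\big(1+\tfrac{P_k}{2\sigma_k^2}\big(1+(\sum_{w\in\mathcal{W}_k} P_w\sum_{m\in\mathcal{M}} c_{k,w}^m)^2\big)\big) = U_k$.

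The individual steps are standard; the part requiring the most care --- and where the content of the bound really lives --- is the genie choice together with the realization that revealing the interferers turns the troublesome multiplicative nonlinearity into a \emph{known} deterministic gain, so that Lemma~\ref{lem:optinterf} can be applied pathwise to the conditional variance. A secondary point to get right is the maximum-entropy step: conditioning on $X_{\mathcal{W}_k}^n$ leaves $Y_k[i]$ non-Gaussian, so I must use the trace (second-moment) form of the Gaussian maximum-entropy bound rather than any per-component argument, and I must track the convention $\mathbb{E}[|N_k[i]|^2]=2\sigma_k^2$ so that the constants in $U_k$ come out correctly.
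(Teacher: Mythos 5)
Your proposal is correct and follows essentially the same route as the paper's proof in Appendix~B: Fano plus data processing, conditioning on the interfering codewords (which the paper introduces via the independence $H(M_k)=H(M_k\mid\{X_w^n\})$ rather than as an explicit genie, an equivalent framing), single-letterization, a Gaussian maximum-entropy bound on $h(Y_k[i]\mid\{X_w^n\})$, and Lemma~\ref{lem:optinterf} to maximize the now-deterministic gain. The only presentational differences are that the paper phrases the max-entropy step through $\tfrac{1}{2}\log\det(\textup{cov}(Y_k^R[i],Y_k^I[i]))$ followed by $\det(A)\le(\textup{trace}(A)/n)^n$ --- which is exactly your second-moment form $h(Y)\le\log(\pi e\,\mathbb{E}[|Y|^2])$ --- and that it routes the power constraint through the instantaneous powers $P_{k,i}$ and Jensen's inequality, whereas you invoke $\mathbb{E}[|X_k[i]|^2]\le P_k$ per symbol directly; both yield the identical bound $U_k$.
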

\begin{proof}
See Appendix~\ref{app:genieK}.
\end{proof}

The proof of Theorem~\ref{thm:OBgenieK} involves the identification of constant-amplitude signaling for the interferers $w\in\mathcal{W}_k$ (see \eqref{eq:constampl}) to be the best strategy with regards to maximizing the rate of user-$k$. Lemma~\ref{lem:optinterf} forms the basis for our achievability scheme discussed next.

\subsection{Capacity Region Inner Bounds} \label{sec:ach}
We first develop a (single-user) capacity lower bound for each individual user $k \in \{1,2,\ldots,K\}$. We then time-share between such single-user achievability strategies to obtain a capacity region inner bound $\mathcal{C}_{\textup{in}}$.
Towards this end, consider the channel output for user-$k$, $k \in \{1,2,\ldots,K\}$ in \eqref{eq:Kuserapprox}.
Suppose the interferer symbols are chosen\footnote{This approach of obtaining lower bounds for the channel of interest by choosing the interferer behaviour was also used in \cite{agrell2015influence}.} as in \eqref{eq:constampl} from Lemma~\ref{lem:optinterf}.
This results in the following \textit{memoryless} single-user channel:
\begin{align}
{Y_{k}[i]} &{= X_{k}[i]\left(1+\jmath \sum_{m\in\mathcal{M}} {c_{k,w}^m} \sum_{\substack{w\in\mathcal{W}_k}}  P_w \right)+N_{k}[i]}. \label{eq:shamai}
\end{align}
Notice that \eqref{eq:shamai} is a complex AWGN channel with a peak power constraint on the input, which has been extensively studied in the information theory literature~\cite{smith1971information,shamai1995capacity,sharma2010transition,thangaraj2017capacity,dytso2019capacity}. It is known that the capacity achieving input distribution for this channel is discrete in amplitude with uniform phase. No closed form expressions exist for the capacity of the channel, but the number of mass points for the amplitude of the capacity achieving input distribution as a function of the signal-to-noise ratio have been characterized~\cite{shamai1995capacity}.

For our purposes of computing an achievable rate for user-$k$ in \eqref{eq:shamai} (under constant-amplitude signaling for the interferers), we resort to the lower bounding technique used in~\cite[eq.~(38)]{shamai1995capacity}, based on the entropy power inequality. 
We have the following theorem that gives a lower bound on the single-user capacity for user-$k$, $k \in \{1,2,\ldots,K\}$.
\begin{theorem} \label{thm:ach}
The single-user capacity of user-$k$ under peak input power constraints is lower bounded as:
\begin{align} \label{eq:lbepi0}
C_k \geq L_k, \: \forall \:  k \in \{1,2,\ldots,K\},
\end{align}
where
\begin{align} \label{eq:lbepi}
{L_k} &{\triangleq \log\left(1+\frac{P_{k}}{2\sigma_k^2 e}\left(1+\left(\sum_{\substack{w\in\mathcal{W}_k}}  P_w \sum_{m\in\mathcal{M}} {c_{k,w}^m}\right)^2\right) \right)}.
\end{align}
\end{theorem}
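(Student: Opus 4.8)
The plan is to recognize \eqref{eq:shamai} as a memoryless complex AWGN channel with a fixed, known complex gain and to lower-bound its mutual information by the entropy-power-inequality (EPI) argument of \cite[eq.~(38)]{shamai1995capacity}. I would first abbreviate the deterministic gain as $g_k \triangleq 1 + \jmath \sum_{w\in\mathcal{W}_k} P_w \sum_{m\in\mathcal{M}} c_{k,w}^m$, so that \eqref{eq:shamai} reads $Y_k[i] = g_k X_k[i] + N_k[i]$ with $|g_k|^2 = 1 + \bigl(\sum_{w\in\mathcal{W}_k} P_w \sum_{m\in\mathcal{M}} c_{k,w}^m\bigr)^2$. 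Since fixing the interferers to the constant-amplitude choice \eqref{eq:constampl} of Lemma~\ref{lem:optinterf} makes the channel memoryless, transmitting i.i.d.\ symbols $X_k[i]$ subject to the per-symbol peak constraint $|X_k[i]|^2 \le P_k$ achieves the single-letter rate $C_k \ge \max_{p(x_k):\,|x_k|^2 \le P_k} I(X_k;Y_k)$; it therefore suffices to lower-bound this mutual information.

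Next I would decompose $I(X_k;Y_k) = h(Y_k) - h(Y_k \mid X_k) = h(Y_k) - h(N_k)$, the last equality because, given $X_k$, the output is a deterministic shift of the noise. Applying the complex EPI (normalized for two real dimensions, so no factor $\tfrac12$ appears in the exponents) to the independent sum $Y_k = g_k X_k + N_k$ gives $2^{h(Y_k)} \ge 2^{h(g_k X_k)} + 2^{h(N_k)}$. The scaling law $h(g_k X_k) = h(X_k) + \log|g_k|^2$, which holds because multiplication by $g_k$ is a similarity of $\mathbb{R}^2$ with Jacobian determinant $|g_k|^2$, then yields $2^{h(Y_k)} \ge |g_k|^2\, 2^{h(X_k)} + 2^{h(N_k)}$.

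The remaining step evaluates the two entropies. The circularly symmetric Gaussian noise satisfies $2^{h(N_k)} = 2\pi e \sigma_k^2$ (the factor $2$ reflecting the real/imaginary variance convention that also produced the $2\sigma_k^2$ in \eqref{eq:lbTIN}), and the peak-constrained input entropy is maximized by taking $X_k$ uniform on the disk $\{|x| \le \sqrt{P_k}\}$, giving $2^{h(X_k)} = \pi P_k$. Substituting and dividing by $2^{h(N_k)}$ produces
\[
2^{I(X_k;Y_k)} \;\ge\; \frac{|g_k|^2\,\pi P_k + 2\pi e \sigma_k^2}{2\pi e \sigma_k^2} \;=\; 1 + \frac{|g_k|^2 P_k}{2\sigma_k^2 e},
\]
and taking logarithms together with the value of $|g_k|^2$ gives exactly $C_k \ge L_k$.

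The entropy evaluations are routine; the step that demands the most care is the bookkeeping of real versus complex dimensions, since a misapplied EPI normalization or the use of $|g_k|$ in place of $|g_k|^2$ would change the constant inside the logarithm. I would also remark that the uniform-disk input is chosen only to obtain a closed-form achievable rate --- it is not claimed optimal, as the exact peak-limited capacity has no closed form \cite{shamai1995capacity} --- and that, since constant-amplitude interferers can still convey information through their phases, the competing users retain positive rates, so the point with $R_k = L_k$ genuinely lies in $\mathcal{C}_K$ and legitimately lower-bounds the single-user capacity $C_k$.
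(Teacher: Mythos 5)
Your proof is correct and follows essentially the same route as the paper's Appendix C: the decomposition $I(X_k;Y_k)=h(Y_k)-h(N_k)$, the two-real-dimension entropy power inequality, the scaling law $h(g_kX_k)=h(X_k)+\log|g_k|^2$, and the uniform-on-disk input (which is exactly the distribution in \eqref{pdf.R} with uniform phase) giving $2^{h(X_k)}=\pi P_k$. Your added remarks on EPI normalization and on why the resulting rate tuple lies in $\mathcal{C}_K$ are accurate but not needed beyond what the paper already does.
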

\begin{proof}
See Appendix~\ref{sec:appach}.
\end{proof}

Theorem~\ref{thm:ach} defines an achievable rate for user-$k$ in the model specified by \eqref{eq:Kuserapprox}. The rate in \eqref{eq:lbepi} is achieved when all the interferers do constant-amplitude signaling, i.e., the interference symbols satisfy $|x_{w}[j]|^2 = P_w$, $\forall \:w\in\mathcal{W}_k, \: 1 \leq j \leq n$, while user-$k$ uses symbols distributed according to \cite[eq.~(30)]{shamai1995capacity} with the phase of $X_k$ being uniform on $[-\pi,\pi]$ and independent of the amplitude $|X_k|=R_k$ that has probability density function:
\begin{align}\label{pdf.R}
p_{R_k}(r)=
\begin{cases}
\frac{2r}{P_k}, 0 \leq r \leq \sqrt{P_k}\\
0, \textup{elsewhere}.
\end{cases}
\end{align}

The strategy of constant-amplitude signaling for the interferers $w\in\mathcal{W}_k$, along with the scheme based on \eqref{pdf.R} for user-$k$, together define $K$ achievable rate tuples on the $K$-dimensional plane. Time-sharing between such achievable rate tuples (see \eqref{eq:timeshare}) yields an inner bound $\mathcal{C}_{\textup{in}}$ for the channel in \eqref{eq:Kuserapprox}. A comparison between the TIN inner bound in \eqref{eq:lbTIN}, the outer bound $\mathcal{C}_{\textup{out}}$ in Theorem~\ref{thm:OBgenieK} and the inner bound $\mathcal{C}_{\textup{in}}$ obtained by time-sharing between rate tuples resulting from Theorem~\ref{thm:ach} follows next.

\begin{remark}
The upper and lower bounds in Theorems~\ref{thm:OBgenieK} and \ref{thm:ach} only differ by the factor of $e$ that appears in the denominator of the argument of the logarithm in \eqref{eq:lbepi}. Asymptotically, in the limit of high signal-to-noise ratio, the difference between the right-hand-sides of \eqref{eq:ob2} and \eqref{eq:lbepi} is simply $\log_2(e)$ bits. This is similar in spirit to the constant-gap capacity results (such as the ``half-bit theorem'') for linear Gaussian interference channels~\cite{etkin2008gaussian}.
\end{remark}

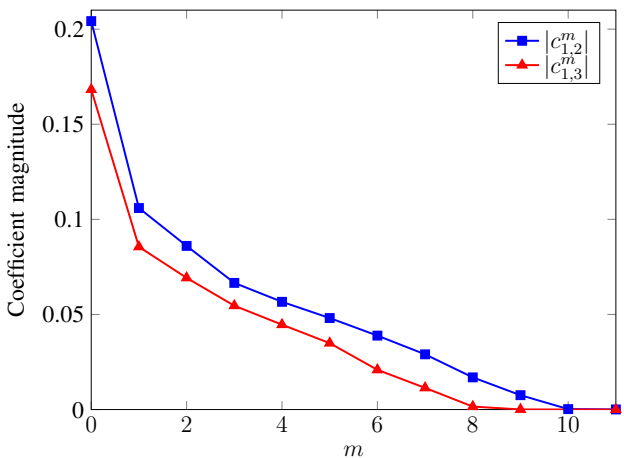
\begin{figure}[!b]
\vspace{-3cm}
\begin{center}
\scalebox{0.75}{
%
%
\definecolor{mycolor1}{rgb}{0.00000,0.44700,0.74100}%
\definecolor{mycolor2}{rgb}{0.85000,0.32500,0.09800}%
\begin{tikzpicture}

\begin{axis}[%
every axis/.append style={font=\large},
width=1.05\columnwidth,
height=0.8\columnwidth,
scale only axis,
xmin=0,
xmax=11,
xlabel={$m$},
ymin=0,
ymax=0.21,
ytick={0,0.05,0.1,0.15,0.2},
yticklabels={0,0.05,0.1,0.15,0.2},
ylabel={$\text{Coefficient magnitude}$},
ylabel near ticks,
xlabel near ticks,
axis background/.style={fill=white},
legend style={
legend pos=north east,legend cell align=left,row sep=-0.5ex,grid style={dashed},font=\large
}
]

\addplot [color=blue, mark = square*, mark size=2, line width=1.0pt]
  table[row sep=crcr]{%
0	0.20424\\
1	0.10596\\
2	0.0860400000000006\\
3	0.0665999999999993\\
4	0.0566399999999998\\
5	0.0481200000000008\\
6	0.0388800000000007\\
7	0.0290400000000002\\
8	0.0169200000000007\\
9	0.00755999999999979\\
10	0.000239999999999796\\
11	5.99999999995049e-05\\
};
\addlegendentry{$|c_{1,2}^m|$}

\addplot [color=red, mark = triangle*, mark size=2.5, line width=1.0pt]
  table[row sep=crcr]{%
0	0.168240000000001\\
1	0.0855599999999992\\
2	0.0692400000000006\\
3	0.0546000000000006\\
4	0.0446399999999993\\
5	0.0349199999999996\\
6	0.02088\\
7	0.0114000000000001\\
8	0.00155999999999956\\
9	5.99999999995049e-05\\
11	5.99999999906231e-07\\
};
\addlegendentry{$|c_{1,3}^m|$}

\end{axis}

\begin{axis}[%
width=5.833in,
height=4.375in,
at={(0in,0in)},
scale only axis,
xmin=0,
xmax=1,
ymin=0,
ymax=1,
axis line style={draw=none},
ticks=none,
axis x line*=bottom,
axis y line*=left
]
\end{axis}
\end{tikzpicture}
\end{center}
\caption{ Absolute value of channel coefficients $|c_{1,2}^m|, |c_{1,3}^m|$ involved in the model \eqref{eq:Kuserapprox} for user-$1$ being the channel of interest. Since the coefficients are symmetric about $m=0$, only the positive time indices are shown. It is seen that for both curves, a channel memory of $M=11$ seems sufficient, since both the sets of coefficients are practically zero beyond $10$ symbols.
}
\label{fig:coeffs}
\end{figure}

\begin{figure}[!t]
\vspace{-6cm}
\begin{center}
\scalebox{1.0}{
\input{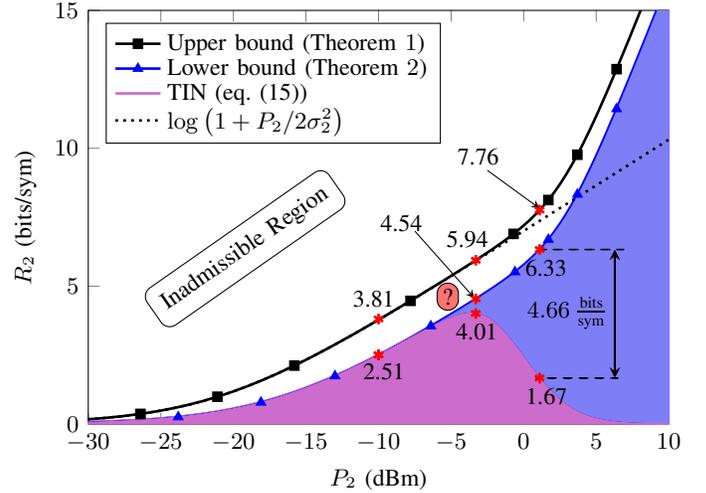}
}
\end{center}
\caption{ Upper bound in Theorem~\ref{thm:OBgenieK}, lower bound in Theorem~\ref{thm:ach} and the baseline scheme of treating interference as noise versus peak input power. The points marked in red correspond to peak input powers of $-10$ dBm, $-3.3$ dBm, and $1.1$ dBm, with the values of the $y-$coordinates on the corresponding capacity curves marked alongside. These values will be used in the three dimensional depiction of the corresponding rate regions in Fig.~\ref{fig:OBregion}. The black dotted curve depicts a $\log(1+\textrm{SNR})$ bound for user-$2$ that would be obtained if the SNR is taken to be simply $P_2/2\sigma_2^2$ (see corresponding discussion in Sec.~\ref{sec:num}).}
\label{fig:OBsnr}
\end{figure}

\begin{figure*}[!t]
\input{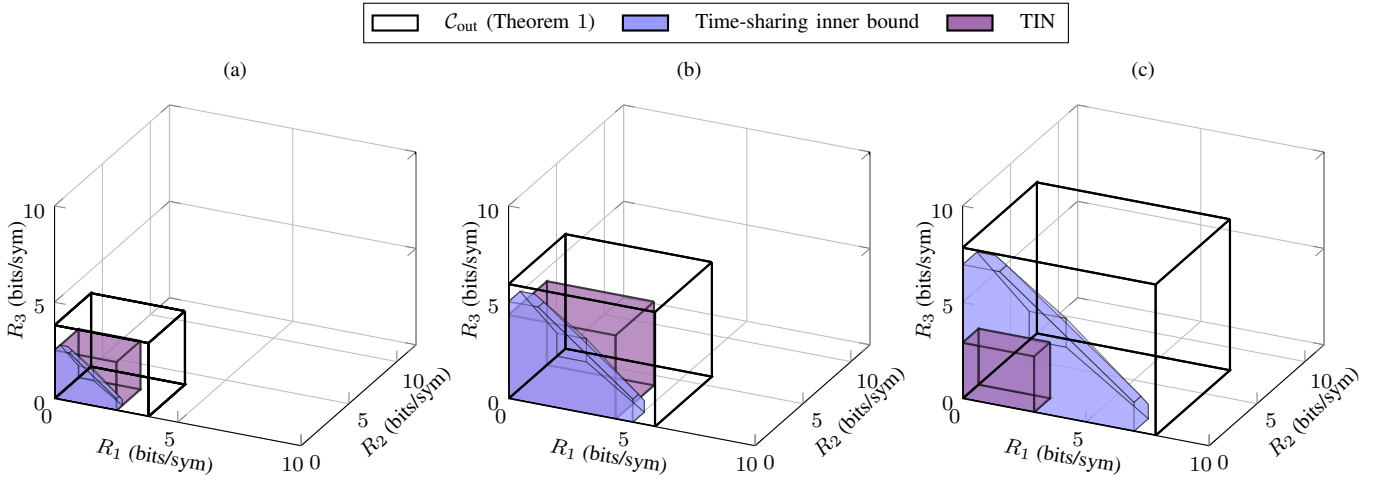}
\caption{Outer bound $\mathcal{C}_{\textup{out}}$ in Theorem~\ref{thm:OBgenieK} and inner bounds for a peak transmitted power per user of (a) $-10$ dBm, (b) $-3.3$ dBm, and (c) $1.1$ dBm. The intercepts of the different regions on each of the axes correspond to the red starred values in Fig.~\ref{fig:OBsnr}.}
\label{fig:OBregion}
\end{figure*}

\begin{figure}{}
\vspace{-5mm}
\begin{center}
\scalebox{0.65}{\input{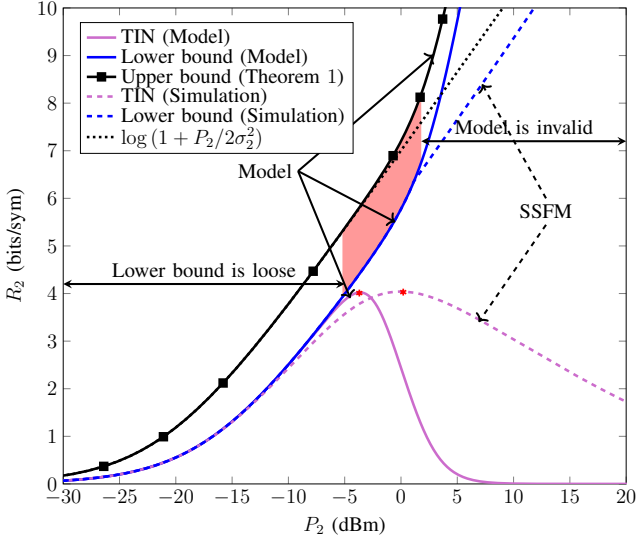}}
\end{center}
\caption{ Capacity lower bounds based on the model \eqref{eq:Kuserapprox} (solid lines) versus SSFM simulations (dashed lines). The red shaded area represents the regime where the proposed lower bound is useful and interesting (beyond the low-power regime). At higher values of launch powers, the model in \eqref{eq:Kuserapprox} becomes inaccurate. At low powers, the lower bound can be improved to $\log\left(1+{P_2}/({2\sigma_2^2 e})\right)$ by setting the interferer transmissions $|x_w[i]|^2$ to be zero instead of $P_w$ as in \eqref{eq:constampl}.}
\label{fig:OBregionSSFM}
\end{figure}

\subsection{Numerical Results} \label{sec:num}

\begin{table}[t]
\caption{Model Parameters}
\centering
\footnotesize
\begin{tabular}{c c}
\toprule
Parameter & Value \\ 
\midrule
{Memory Length $M$ (eq. \eqref{eq:Mnotation})} & {11}\\
Number of WDM users $K$ & 3 \\
Distance $L$ & $250 \: \textrm{km}$ \\
Nonlinearity parameter $\gamma$ & $1.2 \: \textrm{W}^{-1} \textrm{km}^{-1}$ \\
Signalling Rate & $32 \: \textrm{Gbaud}$ \\
Fiber attenuation $\alpha$ & $0.2 \: \textrm{dB/km}$ \\
Group velocity dispersion $\beta_2$ & $-21.7 \: \textrm{ps\textsuperscript{2}/km}$ \\
RRC pulse-shaping roll-off & 0.1\\
{Amplifier noise figure} & {$3 \: \textrm{dB}$}\\
Channel spacing $\Omega$ & $100 \: \textrm{GHz}$\\
\bottomrule
\end{tabular}
\label{table:param} 
\end{table}
\normalsize

The parameters used in our numerical results are summarized in Table \ref{table:param}. As in Example~\ref{ex:1}, we consider the case of $K=3$ WDM channels. {We have taken $M=11$ in \eqref{eq:Kuserapprox} since this is a good approximation to the channel memory for a single-span system of length $L=250 \: \textrm{km}$ and the signalling rate under consideration. See Fig.~\ref{fig:coeffs} for an illustration, where the absolute value of channel coefficients $|c_{1,2}^m|,|c_{1,3}^m|$ involved in the model \eqref{eq:Kuserapprox} are depicted for user-$1$ being the channel of interest. Note that the blue curve always lies above the red curve, since channels that are closer to the channel of interest have a stronger contribution to the nonlinear interference. The channel coefficients decrease in magnitude with $m$, and both sets of coefficients are seen to be practically zero beyond $10$ symbols. Hence, a value of $M=11$ captures the memory involved in the coupling between both pairs of channels}.

The upper bound on rates admissible for user$-2$ ($U_2$) in Theorem~\ref{thm:OBgenieK} and the corresponding lower bound in Theorem~\ref{thm:ach} are plotted in Fig. \ref{fig:OBsnr} against the peak input power for the symmetric case of $P_1=P_2=P_3$. Theorem~\ref{thm:ach} implies that any rate below $L_k$ in \eqref{eq:lbepi} is achievable, which gives the blue shaded area. The upper bound in \eqref{eq:ob1} from Theorem~\ref{thm:OBgenieK} gives an inadmissible region, which is not achievable. For comparison, we also plot the TIN bound in \eqref{eq:lbTIN} obtained by treating the interference term in \eqref{NLI.TIN} as Gaussian noise (whose variance is computed numerically), by choosing the inputs $X_k^n$ to be i.i.d. (with equal powers $P_k=P$ for all $k \in \{1,2,3\}$) according to the probability distribution $p_{X}(x)$, where the phase of $X$ is uniform on the interval $[-\pi,\pi]$ and independent of its amplitude $|X|=R$ that has probability density function given in \eqref{pdf.R}.
The choice of this distribution is motivated by commonly used achievability schemes for complex Gaussian channels with peak power constraints on the input~\cite{shamai1995capacity}. The resulting TIN region is depicted by the shaded purple area in Fig. \ref{fig:OBsnr}. The achievability of the area marked ``?" remains unknown. {We have also depicted (in dotted black) a $\log(1+\textrm{SNR})$ bound for the channel of interest that would be obtained if the SNR is taken to be simply $P_2/2\sigma_2^2$, without taking into account the factor $\left(1+\jmath \sum_{m\in\mathcal{M}} c_{2,w}^m \sum_{w\in\mathcal{W}_2}  P_w \right)$ that multiplies the input in the simplified model as a conse-} {quence of Lemma~\ref{lem:optinterf}. When the correct definition of SNR is applied, given by $P_2/2\sigma_2^2 \left(1+\left(\sum_{w\in\mathcal{W}_2}  P_w\sum_{m\in\mathcal{M}} c_{2,w}^m \right)^2\right)$, the resulting $\log(1+\textrm{SNR})$ bound coincides with the upper bound in Theorem~\ref{thm:OBgenieK}. The mismatch between these two bounds is due to the fact that the regular perturbative model in \eqref{eq:Kuser} is not} {energy preserving, as we alluded to earlier in Sec.~\ref{sec:chnmodel}.}

In Figs. \ref{fig:OBregion}(a)--(c), we plot the trade-off between the rates of the three users for fixed (and equal) powers of $-10$ dBm, $-3.3$ dBm and $1.1$ dBm, respectively (see the red stars in Fig.~\ref{fig:OBsnr}). The cuboidal region implied by the genie-aided outer bound in Theorem~\ref{thm:OBgenieK} is shown by solid black lines in Figs. \ref{fig:OBregion}(a)--(c). For comparison, we have depicted the respective achievable rate regions obtained by treating the interference terms in \eqref{eq:Kuserapprox} as Gaussian noise as the cuboidal regions in purple. Note that these \emph{interference as noise} regions eventually vanish in the highly nonlinear regime.

The strategy of constant-amplitude signaling for the interferers (users $w\in\mathcal{W}_k$) along with Theorem~\ref{thm:ach} for user-$k$, $k \in \{1,2,3\}$, defines 3 achievable rate triples on the 3-dimensional plane. Time-sharing between these achievable rate triples yields another inner bound for the channel in \eqref{eq:Kuserapprox}. {We further choose $16-$PSK alphabets for the interferers in Figs. \ref{fig:OBregion}(a)--(c)\footnote{We note here that if all the interfering users $w \in \mathcal{W}_k$ were to send a fixed sequence $P_w$ as in Lemma~\ref{lem:optinterf}, then their rates would be zero. But this need not necessarily be the case. The interfering users could achieve a non-zero rate by using phase-shift keying alphabets for instance, as depicted in Fig.~\ref{fig:OBregion}.}. When $k=2$ is the channel of interest, the interferer signals on channels $k=1$ and $k=3$ are chosen} { from $16-$PSK alphabets. The channel model for the user $k=1$ in this case becomes $Y_1[i]=X_1[i]+\sum_{m}\jmath c_{1,2}^m|X_2[i-{m]|^2 X_1[i]+\sum_{m}\jmath c_{1,3}^m|X_3[i-m]|^2 X_1[i]+N_1[i]}$, {wherein one} { of the interfering terms results from the channel of interest $k=2$ (multiplied by the PSK signal from $k=1$) while the other one comes from a PSK constellation for $k=3$ (multiplied by the PSK signal from $k=1$). The achievable rate for the user $k=1$ is now obtained by treating both kinds of interference terms as noise.}} These polyhedral regions corresponding to time-sharing are depicted in blue in Figs. \ref{fig:OBregion}(a)--(c). It is observed that this simple strategy of time-sharing between single-user codes outperforms treating interference as noise. 
In fact, the relative gains of time-sharing (in terms of better achievable rate tuples) compared to treating interference as noise becomes more pronounced with increasing powers. We note that these conclusions are valid as long as the channel model under consideration is valid. It is well accepted that the perturbative model under consideration is accurate for powers a few dB beyond the optimum launch power (which is $-3.3$ dBm in Fig.~\ref{fig:OBsnr}).

\begin{figure}
\begin{center}
\scalebox{0.6}{\begin{tikzpicture}[thick]
\node (e1) at (-2.9,1.3) [rectangle, draw, minimum height=1.0cm, minimum width=1.0cm]{${E_1}$};
\node (e11) at (-2.9,0) [rectangle, draw, minimum height=1.0cm, minimum width=1.0cm]{${E_2}$};
\node (e2) at (-2.9,-1.4) [rectangle, draw, minimum height=1.0cm, minimum width=1.0cm]{${E_3}$};
\node[rectangle, draw, minimum height=2.0cm, minimum width=3.5cm, text width=2.75cm,rounded corners,align=center] (t1) at (2.1,0) {{Tx processing $+$ \textbf{NLSE channel \eqref{eq:nlse}} $+$ Rx processing}};
\node[rectangle, draw, minimum height=2.0cm, minimum width=3.5cm, text width=3.5cm,rounded corners,align=center] (r1) at (8,0) {{Compute the rate $\frac{1}{n}\sum_{i=1}^n I(X_k[i];Y_k[i])$ for $k=1,2,3$}};
\draw[->] (e1) --++(t1) node[midway, sloped, above]{$X_1^n \sim p_{X_1^n}$};
\draw[->] (e11) --++(t1) node[midway, above]{$X_2^n \sim p_{X_2^n}$};
\draw[->] (e2) --++(t1) node[midway, sloped, above]{$X_3^n \sim p_{X_3^n}$};
\draw[->] (t1.25) -- (r1.166) node[midway, above]{ $Y_1^n$};
\draw[->] (t1.0) -- (r1.180) node[midway, above]{$Y_2^n$};
\draw[->] (t1.-23) -- (r1.193) node[midway, above]{$Y_3^n$};
\end{tikzpicture}}
\caption{Schematic for computation of the bounds via SSFM simulations. The $k$-th user transmits a signal  $X_k^n$ according to the distribution $p_{X_k^n}$ using an encoder $E_k$, for $k=1,2,3$. The input distributions are shown in Table~\ref{table:pdf} for computation of the TIN rate as well as the lower bound (analogous to Theorem~\ref{thm:ach}). After propagation over the NLSE channel described by \eqref{eq:nlse}, the outputs $Y_k^n, k=1,2,3$ are used to compute the achievable rate $\frac{1}{n}\sum_{i=1}^n I(X_k[i];Y_k[i])$ for $k=1,2,3$.}
\label{fig:ssfm}
\end{center}
\end{figure}
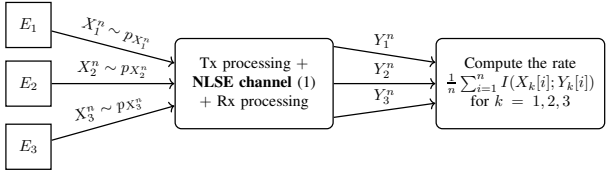

\begin{table}[t]
\caption{Input distributions for SSFM simulations (for $k=2$ being the channel of interest)}
\centering
\footnotesize
\begin{tabular}{c c}
\toprule
{Scenario} & {Distribution} \\ 
\midrule
 & {$p_{X_1^n}$: i.i.d. with $|X_1|=R_1 \sim p_{R_1}(r)$ from \eqref{pdf.R}}\\
{TIN bound} & {$p_{X_2^n}$: i.i.d. with $|X_2|=R_2 \sim p_{R_2}(r)$ from \eqref{pdf.R}}\\
 & {$p_{X_3^n}$: i.i.d. with $|X_3|=R_3 \sim p_{R_3}(r)$ from \eqref{pdf.R}}\\
\midrule
 & {$p_{X_1^n}$: i.i.d. from a $16$-PSK constellation}\\
{Lower bound (SSFM)} & {$p_{X_2^n}$: i.i.d. with $|X_2|=R_2 \sim p_{R_2}(r)$ from \eqref{pdf.R}}\\
 & {$p_{X_3^n}$: i.i.d. from a $16$-PSK constellation}\\
\bottomrule
\end{tabular}
\label{table:pdf} 
\end{table}
\normalsize

{In Fig.~\ref{fig:OBregionSSFM}, we compare the capacity bounds in Theorem~\ref{thm:ach}} { and TIN obtained from the model \eqref{eq:Kuserapprox} with split-step Fourier method (SSFM) simulations. The simulation setup to compute the achievable rates is depicted schematically in Fig.~\ref{fig:ssfm}. The $k$-th user transmits a signal  $X_k^n$ according to the distribution $p_{X_k^n}$ for $k=1,2,3$. The transmitted symbols are either chosen i.i.d. according to the probability distribution in \eqref{pdf.R} or taken i.i.d. from a $16-$PSK constellation -- see Table~\ref{table:pdf} for the details corresponding to the different scenarios. After propagation over the NLSE channel described by \eqref{eq:nlse}, the outputs $Y_k^n, k=1,2,3$ are realized. We then compute the} { achievable rate $\frac{1}{n}\sum_{i=1}^n I(X_k[i];Y_k[i])$ for $k=1,2,3$. The SSFM calculations are performed using uniform spacial step sizes of 0.01 km.}

{It is observed that the power at which the TIN curves peak differs between the model and simulations, albeit the same peak values are attained using both approaches. Furthermore, it is observed that beyond a certain power (around $2$ dBm in Fig.~\ref{fig:OBregionSSFM}), the capacity lower bound from simulations is smaller than that predicted by the model. This could possibly be attributed to the inaccuracy of the adopted simplified model (given by \eqref{eq:Kuserapprox}) at higher transmit powers. On the other hand, at low powers, the lower bound in Theorem~\ref{thm:ach} can be improved to $\log\left(1+{P_2}/({2\sigma_2^2 e})\right)$ by setting the interferer transmissions $|x_w[i]|^2$ to be zero instead of $P_w$ as in \eqref{eq:constampl}. Therefore, in Fig.~\ref{fig:OBregionSSFM}, the shaded area in red represents the regime where the proposed lower bound is useful and interesting (beyond the low-power regime).}

\section{Conclusions} \label{sec:conc}

We took a multi-user information theoretic view of a $K$-user wavelength division multiplexing system impaired by cross-phase modulation and dispersion, and derived a novel capacity region outer bound using genie-aided techniques. An achievable rate region was also obtained for the same, and it was shown that time-sharing between certain single-user schemes can strictly outperform treating interference as noise. Though we assumed that SPM is ideally compensated in our model, we believe that the results in this paper can be generalized to take into account both SPM as well as XPM. 

This paper is a very first step towards a multi-user characterization of fiber optic systems with realistic channel models, breaking away from the traditional single-user perspective. Future works include obtaining tighter achievable regions/inner bounds as well as outer bounds, and the design and implementation of schemes that can achieve the presented capacity bounds in practice. {Moreover, an extension of the current results to a multi-span situation with signal-noise interactions seems to be an interesting avenue for further research.}

\section*{Acknowledgements}
The authors would like to thank Dr. Hamdi Joudeh (Eindhoven University of Technology) for fruitful discussions on the channel model and the proofs of Theorems~\ref{thm:OBgenieK} and \ref{thm:ach}. {The authors are also grateful to the Editor and the two anonymous reviewers whose insightful comments have greatly improved the quality of the paper and the exposition.}

\appendices
\section{Proof of lemma~\ref{lem:optinterf}} \label{app:lem:genieK}
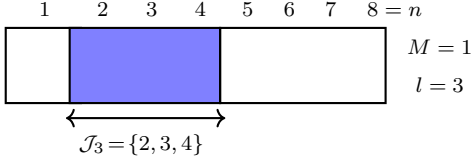
\begin{figure}
\begin{center}
\begin{tikzpicture}[thick]
\node (a4) at (0,0) [rectangle, draw, minimum width=1cm, minimum height=1cm]{};
\node (a5) at (1.35,0) [rectangle, fill=blue!50!white, draw, minimum width=2cm, minimum height=1cm]{};
\node (a6) at (3.45,0) [rectangle, draw, minimum width=2.2cm, minimum height=1cm]{};
\node () at (0.02,0.75) {\footnotesize $1$};
\node () at (0.79,0.75) {\footnotesize $2$};
\node () at (1.44,0.75) {\footnotesize $3$};
\node () at (2.09,0.75) {\footnotesize $4$};
\node () at (2.72,0.75) {\footnotesize $5$};
\node () at (3.27,0.75) {\footnotesize $6$};
\node () at (3.82,0.75) {\footnotesize $7$};
\node () at (4.67,0.75) {\footnotesize $8=n$};
\draw[<->] (0.30,-0.7) -- (2.42,-0.7);
\node () at (1.30,-1.05) {\footnotesize $\mathcal{J}_3\!=\!\{2,3,4\}$};
\node () at (5.27,0.25) {\footnotesize $M=1$};
\node () at (5.27,-0.25) {\footnotesize $l=3$};
\end{tikzpicture}
\end{center}
\caption{Illustration of the inequality \eqref{eq:lemineq1} for $M=1$ and $n=8$. The weighted sum on the left-hand side of \eqref{eq:lemineq1} is upper bounded by replacing the interferer squared amplitude terms by its maximum value over the interval $\mathcal{J}_3$.}
\label{fig:lem1}
\end{figure}
We first note that for any given $l$ such that $M \leq l \leq n-M$, the symbols $X_w[l]$ are well defined. We then have
\begin{align}
\sum_{m \in \mathcal{M}} {c_{k,w}^m} |x_w[l-m]|^2 \leq \max_{j \in \mathcal{J}_l} |x_w[j]|^2 \left(\sum_{m \in \mathcal{M}} {c_{k,w}^m}\right), \label{eq:lemineq1}
\end{align}
where the set $\mathcal{J}_l \triangleq \{l-M,l-M+1,\ldots,l+M\}$ is a set of indices for the sliding window depicted in Fig.~\ref{fig:lem1}. In other words, we upper bound each of the $|x_w[l-m]|^2$ terms in the weighted sum $\sum_{m \in \mathcal{M}} {c_{k,w}^m} |x_w[l-m]|^2$ by its maximum value over the interval $\mathcal{J}_l$. Furthermore,
\begin{align}
\max_{j \in \mathcal{J}_l} |x_w[j]|^2 &\leq \max_{j \in \{1,2,\ldots,n\}} |x_w[j]|^2 \notag\\
&\leq P_w, \label{eq:lemineq2}
\end{align}
where the last step follows from the peak power constraint on user $w$.
Applying the inequality \eqref{eq:lemineq2} in \eqref{eq:lemineq1}, we obtain
\begin{align} \label{eq:lemproof}
\sum_{m \in \mathcal{M}} {c_{k,w}^m} |x_w[l-m]|^2 \leq P_w \left(\sum_{m \in \mathcal{M}} {c_{k,w}^m}\right),
\end{align}
as desired.
Clearly, equality holds in \eqref{eq:lemproof} with the choice of
\begin{align}
|x_{w}[i-m]|^2 = P_w, \: \forall \:w\in\mathcal{W}_k, \: m \in \mathcal{M}, 1 \leq i-m \leq n. \label{eq:optinterf}
\end{align}
This completes the proof.

\section{Proof of Theorem \ref{thm:OBgenieK}} \label{app:genieK}
We now establish the outer bound using information theoretic inequalities. 
The rate of user$-k$, $k \in \{1,2,\ldots,K\}$, can be upper bounded as follows:
\begin{align}
&nR_k \notag\\
&{\stackrel{(a)}=} H(M_k) \notag\\
&\stackrel{(b)}= H(M_k|\{X_w^n | w\in\mathcal{W}_k\}) \notag\\
&= H(M_k|\{X_w^n | w\in\mathcal{W}_k\})-H(M_k|\{X_w^n | w\in\mathcal{W}_k\},Y_k^n) \notag\\
&\phantom{wwwwwwwwwwwwww}+H(M_k|\{X_w^n | w\in\mathcal{W}_k\},Y_k^n) \notag\\
&\stackrel{(c)}\leq I(M_k;Y_k^n|\{X_w^n | w\in\mathcal{W}_k\})+H(M_k|Y_k^n) \notag\\
&\stackrel{(d)}\leq I(M_k;Y_k^n|\{X_w^n | w\in\mathcal{W}_k\})+1+P_{e} nR_k  \notag\\
&\stackrel{(e)}= I(M_k;Y_k^n|\{X_w^n | w\in\mathcal{W}_k\})+n\epsilon_n  \notag\\
&\stackrel{(f)}\leq I(X_k^n;Y_k^n|\{X_w^n | w\in\mathcal{W}_k\})\!+\!n\epsilon_n \notag\\
&= h(Y_k^n|\{X_w^n | w\in\mathcal{W}_k\})-\!h(Y_k^n|X_k^n,\{X_w^n | w\in\mathcal{W}_k\})+\!n\epsilon_n \notag\\
&{\stackrel{(g)}=} h(Y_k^n|\{X_w^n | w\in\mathcal{W}_k\})-\!h(N_k^n)+\!n\epsilon_n \notag\\
&\stackrel{(h)}\leq \sum_{i=1}^{n} h(Y_{k}[i]|\{X_w^n | w\in\mathcal{W}_k\})-\sum_{i=1}^{n} h(N_{k}[i])+n\epsilon_n \notag\\
&\stackrel{(i)}\leq \sum_{i=1}^{n} \max_{\{x_{w}^n | w\in\mathcal{W}_k\}} \left[ h(Y_{k}[i]|\{X_w^n=x_{w}^n | w\in\mathcal{W}_k\})\right] \notag\\
&\phantom{wwwww}-\sum_{i=1}^{n} h(N_{k}[i])+n\epsilon_n \notag\\
&\stackrel{(j)}\leq  \sum_{i=1}^{n} \max_{\{x_{w}^n | w\in\mathcal{W}_k\}} \frac{1}{2}\!\log\!\left(\!{ \textup{det}\!\left(\!\textup{cov}\!\left(\begin{aligned}\!&Y_{k}^R[i],Y_{k}^I[i]\Big| \\ &\!\{X_w^n=x_{w}^n | w\in\mathcal{W}_k\} \end{aligned} \!\right)\!\right)}\!\right) \notag\\
&\phantom{wwwww}-n\log\left(\sigma_k^2\right)+n\epsilon_n, \label{eq:convKg}
\end{align}
{where (a) follows from the assumption that the messages are uniformly distributed over their respective ranges}, (b) follows since $M_k$ is independent of $\{X_w^n | w\in\mathcal{W}_k\}$ with the set $\mathcal{W}_k$ being defined in \eqref{eq:Wnotation}, (c) follows since conditioning does not increase the entropy, (d) follows from Fano's inequality with $P_e$ being defined as in \eqref{eq:errorprob}, (e) follows by defining $\epsilon_n =(1/n+P_{e} R_k)$ with $\epsilon_n \xrightarrow{n \to \infty} 0$, (f) follows from the data processing inequality since $M_k \to X_k^n \to Y_k^n$ form a Markov chain conditioned on $\{X_w^n | w\in\mathcal{W}_k\}$, {(g) follows since conditioned on all the inputs, the only remaining uncertainty in $Y_k^n$ is due to $N_k^n$}, (h) follows since conditioning does not increase the entropy and the fact that the additive noise is i.i.d., (i) follows since $h(Y_{k}[i]|\{X_w^n | w\in\mathcal{W}_k\})$ represents an average over $\{x_w^n | w\in\mathcal{W}_k\}$ and the average is upper bounded by the maximum, while (j) follows from the fact that Gaussian random vectors maximize the differential entropy under a covariance constraint. We note that the $\max$ over $x_w^n$ sequences in steps (g) and (h) are subject to the peak power constraint $\underset{i \in \{1,\ldots,n\}}{\max} |x_{w}[i]|^2 \leq P_w \: \forall w \in \mathcal{W}_k$.

It now remains to bound the $\log(\textup{det}(\cdot))$ terms in expression \eqref{eq:convKg}. On expressing equation \eqref{eq:Kuserapprox} in terms of its respective real and imaginary components, we have: 
\begin{align}
Y_{k}^R[i] &\!=\! X_{k}^R[i]\!-\!\sum_{\substack{w\in\mathcal{W}_k}}  \sum_{m\in\mathcal{M}} {c_{k,w}^m} |X_{w}[i-m]|^2 X_{k}^I[i] \!+\!N_{k}^R[i], \label{eq:Kvar1} \\
Y_{k}^I[i] &\!=\! X_{k}^I[i]\!+\!\sum_{\substack{w\in\mathcal{W}_k}}  \sum_{m\in\mathcal{M}} {c_{k,w}^m} |X_{w}[i-m]|^2 X_{k}^R[i]\!+\!N_{k}^I[i]. \label{eq:Kvar2}
\end{align}
Let $\mathbb{E}[(X_{k}^R[i])^2]=p_{k,i}^R$ and $\mathbb{E}[(X_{k}^I[i])^2]=p_{k,i}^I$ be the instantaneous powers associated with the real and imaginary parts of $X_k[i]$. Since the sum of these powers constitute the instantaneous power of $X_k[i]$, we write
\begin{align}
p_{k,i}^R+p_{k,i}^I \leq P_{k,i}, \label{eq:powerinst}
\end{align}
where $\mathbb{E}[|X_k[i]|^2] \leq P_{k,i}$ from the power constraint.
Hence we can write the following chain of inequalities for the determinant of the covariance matrix involved in \eqref{eq:convKg}:
\begin{align}
&\textup{det}(\textup{cov}(Y_{k}^R[i],Y_{k}^I[i]|\{X_w^n=x_{w}^n | w\in\mathcal{W}_k\})) \notag\\
&=\textup{det}\!\Bigg(\!\textup{cov}\!\Bigg(\!X_{k}^R[i] \notag\\
&\phantom{wwwwwww}-\sum_{\substack{w\in\mathcal{W}_k}}  \sum_{m\in\mathcal{M}} {c_{k,w}^m} |X_{w}[i-m]|^2 X_{k}^I[i] +N_{k}^R[i], \notag\\
&\phantom{wwwwwww} X_{k}^I[i] \notag\\
&\phantom{wwwwwww}+\sum_{\substack{w\in\mathcal{W}_k}}  \sum_{m\in\mathcal{M}} {c_{k,w}^m} |X_{w}[i-m]|^2 X_{k}^R[i]+N_{k}^I[i]  \notag\\
&\phantom{wwwwwwwwwwwwwwwwww} \Bigg|\{X_w^n=x_{w}^n | w\in\mathcal{W}_k\}\Bigg)\!\!\Bigg) \notag\\
&\stackrel{(a)}=\textup{det}\!\Bigg(\!\textup{cov}\!\Bigg(\!X_{k}^R[i] \notag\\
&\phantom{wwww}-\sum_{\substack{w\in\mathcal{W}_k}}  \sum_{m\in\mathcal{M}} {c_{k,w}^m} |x_{w}[i-m]|^2 X_{k}^I[i] +N_{k}^R[i], \notag\\
&\phantom{wwwwww} X_{k}^I[i] \notag\\
&\phantom{wwwww}+\sum_{\substack{w\in\mathcal{W}_k}}  \sum_{m\in\mathcal{M}} {c_{k,w}^m} |x_{w}[i-m]|^2 X_{k}^R[i]+N_{k}^I[i]\Bigg)\!\!\Bigg) \notag\\
&\stackrel{(b)}\leq \!\frac{1}{4}\!\left(\!\begin{aligned}&\textup{var}\Bigg(\!X_{k}^R[i] \notag\\
&\phantom{w}-\sum_{\substack{w\in\mathcal{W}_k}}  \sum_{m\in\mathcal{M}} {c_{k,w}^m} |x_{w}[i-m]|^2 X_{k}^I[i] +N_{k}^R[i] \!\Bigg)\\ &\!\!\!\!+\!\textup{var}\Bigg(\!X_{k}^I[i] \notag\\
&\phantom{w}+\sum_{\substack{w\in\mathcal{W}_k}}  \sum_{m\in\mathcal{M}} {c_{k,w}^m} |x_{w}[i-m]|^2 X_{k}^R[i]+N_{k}^I[i] \!\Bigg)\end{aligned}\!\!\right)^2 \notag\\
&= \frac{1}{4}\Bigg(p_{k,i}^R\!+\!p_{k,i}^I\!+\!2\sigma_k^2  \notag\\
&\phantom{wwww} +\left(\sum_{\substack{w\in\mathcal{W}_k}}  \sum_{m\in\mathcal{M}} {c_{k,w}^m} |x_{w}[i-m]|^2\right)^2\!\! (p_{k,i}^R+p_{k,i}^I)\Bigg)^2 \notag\\
&\stackrel{(c)}\leq \frac{1}{4}\left(\!P_{k,i}\!\left(\!1\!+\!\left(\!\sum_{\substack{w\in\mathcal{W}_k}}  \sum_{m\in\mathcal{M}} {c_{k,w}^m} |x_{w}[i-m]|^2 \!\right)^2 \!\right)\!+\!2\sigma_k^2 \!\right)^2\!, \label{eq:Kuser2}
\end{align}
where (a) follows from the independence of the inputs $X_k^n$ and $X_w^n, w \in \mathcal{W}_k$, (b) follows since $\textup{det}(A) \leq \left(\frac{\textup{trace}(A)}{n}\right)^n$ for any $n\times n$ square matrix $A$, while (c) follows from \eqref{eq:powerinst}.
From expressions \eqref{eq:convKg} and \eqref{eq:Kuser2}, we obtain the following expression for an upper bound on the rate achievable by user-$k$, $k \in \{1,2,\ldots,K\}$:
\begin{align}
&n(R_k-\epsilon_n) \notag\\
&\leq \sum_{i=1}^{n} \max_{\{x_{w}^n | w\in\mathcal{W}_k\}} \notag\\
&\phantom{www}\!\left[\!\log\!\!\left(\!\!1\!+\!\frac{P_{k,i}}{2\sigma_k^2}\!\!\left(\!1\!+\!\left(\sum_{\substack{w\in\mathcal{W}_k}}  \sum_{m\in\mathcal{M}} {c_{k,w}^m} |x_{w}[i-m]|^2 \!\right)^2 \!\right)\!\right)\!\right]\!\!. \label{eq:rate1K}
\end{align}
Expression \eqref{eq:rate1K} involves finding the optimal interferer realizations $\{x_{w}^n | w\in\mathcal{W}_k\}$ that maximize the rate $R_k$.
The objective function in \eqref{eq:rate1K} is of the form $\log(1+c(1+f(x)^2))$ with $c$ being a constant, and the function $f(\cdot)$ involves \smash{$\sum_{m \in \mathcal{M}} {c_{k,w}^m} |x_{w}[i-m]|^2$}. This function only takes on nonnegative values due to \eqref{c.positive}, and $|x_{w}[i-m]|^2$ is nonnegative as well. Therefore, by monotonicity, it suffices to maximize $f(x)$ instead of $\log(1+c(1+f(x)^2))$, and we are interested in the following optimization problem:
\begin{align}
O_i=\sum_{\substack{w\in\mathcal{W}_k}}  \: \max_{\substack{x_{w}^n : \\ \underset{m \in \{1,\ldots,n\}}{\max} |x_{w}[m]|^2 \leq P_w}} \: \sum_{m\in\mathcal{M}} {c_{k,w}^m} |x_{w}[i-m]|^2. \label{eq:opt1K}
\end{align}
In particular, this allows us to write (from \eqref{eq:rate1K})
\begin{align}
n(R_k-\epsilon_n) \leq \sum_{i=1}^{n} \log\left(1+\frac{P_{k,i}}{2\sigma_k^2}\left(1+O_i^2 \right)\right). \label{eq:optimK}
\end{align}
Applying the inequality \eqref{eq:lem1ineq} in \eqref{eq:opt1K}, we obtain
\begin{align}
O_i \leq \sum_{\substack{w\in\mathcal{W}_k}} P_w \left(\sum_{m\in\mathcal{M}} {c_{k,w}^m}\right). \label{eq:holder4K}
\end{align}
Using \eqref{eq:holder4K}, the upper bound in \eqref{eq:optimK} now becomes
\begin{align}
&n(R_k-\epsilon_n) \notag\\
&\leq \sum_{i=1}^{n} \!\left[\!\log\!\left(\!1\!+\!\frac{P_{k,i}}{2\sigma_k^2}\!\left(1+\left(\sum_{\substack{w\in\mathcal{W}_k}}  P_w \sum_{m\in\mathcal{M}} {c_{k,w}^m}\right)^2\right)\!\right)\!\right] \notag\\
&\stackrel{(a)}\leq n\log\!\left(\!1\!+\!\frac{1}{n}\!\sum_{i=1}^{n}\! \frac{P_{k,i}}{2\sigma_k^2}\!\left(\!1\!+\!\left(\!\sum_{\substack{w\in\mathcal{W}_k}}  P_w \sum_{m\in\mathcal{M}} {c_{k,w}^m}\right)^2 \right)  \!\right) \notag\\
&\stackrel{(b)}\leq n\log\left(1+\frac{P_{k}}{2\sigma_k^2}\left(1+\left(\sum_{\substack{w\in\mathcal{W}_k}}  P_w \sum_{m\in\mathcal{M}} {c_{k,w}^m}\right)^2\right) \right), \label{eq:convf1K}
\end{align}
where (a) follows from Jensen's inequality, while (b) follows since the maximum power constraint implies the average power constraint $\sum_{i=1}^n P_{k,i} \leq nP_k$.
Dividing throughout by $n$ and letting $n \to \infty$ (which makes $\epsilon_n \to 0$) completes the proof of the upper bound in Theorem~\ref{thm:OBgenieK}.

\section{Proof of Theorem~\ref{thm:ach}} \label{sec:appach}
Consider the memoryless single-user channel in \eqref{eq:shamai}.
\begin{align}
Y_{k} &= X_{k}\left(1+\jmath \sum_{m\in\mathcal{M}} {c_{k,w}^m} \sum_{\substack{w\in\mathcal{W}_k}}  P_w \right)+N_{k}.
\end{align}
The mutual information between $X_k$ and $Y_k$ can be bounded as:
\begin{align}
&I(X_k;Y_k) \notag\\
&= h(Y_k)-h(Y_k|X_k) \notag\\
&= h(Y_k)-h(N_k) \notag\\
&\stackrel{(a)} \geq \ln\left(e^{h\left(X_{k}\left(1+\jmath \sum_{\substack{w\in\mathcal{W}_k}}  P_w \sum_{m\in\mathcal{M}} {c_{k,w}^m} \right)\right)}+e^{h(N_k)}\right) \notag\\
&\phantom{wwwwww}-\ln(2 \pi e \sigma_k^2) \notag\\
&{\stackrel{(b)}=} \ln\!\left(\!e^{h(X_k)+\ln\left(1+\left(\sum_{\substack{w\in\mathcal{W}_k}}  P_w \sum_{m\in\mathcal{M}} {c_{k,w}^m}\right)^2\right)}\!+\!2 \pi e \sigma_k^2 \!\right) \notag\\
&\phantom{wwwwww}-\ln(2 \pi e \sigma_k^2), \label{eq:ach1}
\end{align}
where (a) follows from the entropy power inequality, {while (b) follows from the scaling property of differential entropy}.
Now we choose the input distribution of $X_k$ as in \cite[eq.~(30)]{shamai1995capacity} to maximize the differential entropy $h(X_k)$, with the phase of $X_k$ being uniform on $[-\pi,\pi]$ and independent of the amplitude $|X_k|=R_k$ that has the probability density function given in \eqref{pdf.R}.
This leads to~\cite[eq.~(37)]{shamai1995capacity}
\begin{align}
h(X_k) &= \ln(\pi P_k). \label{eq:ach2}
\end{align}
Substituting \eqref{eq:ach2} in \eqref{eq:ach1}, we obtain
\begin{align}
&I(X_k;Y_k) \notag\\
&\geq \ln\!\left(\!e^{\ln(\pi P_k)+\ln\left(1+\left(\sum_{\substack{w\in\mathcal{W}_k}}  P_w \sum_{m\in\mathcal{M}} {c_{k,w}^m}\right)^2\right)}\!+\!2 \pi e \sigma_k^2 \!\right) \notag\\
&\phantom{wwwwww}-\ln(2 \pi e \sigma_k^2) \notag\\
&= \ln\left(1+\frac{P_{k}}{2\sigma_k^2 e}\left(1+\left(\sum_{\substack{w\in\mathcal{W}_k}}  P_w \sum_{m\in\mathcal{M}} {c_{k,w}^m}\right)^2\right) \right) \: \textup{nats} \notag\\
&= \log\left(1+\frac{P_{k}}{2\sigma_k^2 e}\left(1+\left(\sum_{\substack{w\in\mathcal{W}_k}}  P_w \sum_{m\in\mathcal{M}} {c_{k,w}^m}\right)^2\right) \right) \: \textup{bits}.
\end{align}
This completes the proof.

\nocite{}
\bibliographystyle{IEEEtran}
\bibliography{mylitoptics.bib}

\end{document}